\newcommand{\R}{{\mathbb{R}}}
\newcommand{\Q}{{\mathbb{Q}}}
\newcommand{\cC}{\mathcal{C}}
\newcommand{\cF}{\mathcal{F}}
\newcommand{\cZ}{\mathcal{Z}}
\newtheorem{definition}{Definition}[section]
\newtheorem{proposition}[definition]{Proposition}
\newtheorem{corollary}[definition]{Corollary}
\theoremstyle{definition} 
\theoremstyle{definition} \newtheorem*{remark*}{Remark}
\begin{document}

\begin{frontmatter}

\title{Calibration of Local Volatility Model with Stochastic Interest Rates by Efficient Numerical PDE Method}

\author{Julien Hok \footnote{Corresponding author}}
\address{Credit Agricole CIB, Broadwalk House, 5 Appold St\\
		 London, EC2A 2DA, United Kingdom\\
		\it{julienhok@yahoo.fr}}

\author{Shih-Hau Tan \footnote{Quantitative analyst consultant}}
\address{Cuemacro, London, United Kingdom} 

\begin{versionhistory}
  \vhEntry{1.0}{11.03.2018}{}{1st version}
\end{versionhistory}

\begin{abstract}
Long maturity options or a wide class of hybrid products are evaluated using a local volatility type modelling for the asset price $S(t)$ with a stochastic interest rate $r(t)$.  The calibration of the local volatility function is usually time-consuming because of the multi-dimensional nature of the problem.  In this paper, we develop a calibration technique based on a partial differential equation (PDE) approach which allows an efficient implementation.  The essential idea is based on solving the derived forward equation satisfied by    
$P(t, S, r) \cZ (t, S, r)$, where $P(t, S, r)$ represents the risk neutral probability density of $(S(t), r(t))$ and  $\cZ (t, S, r)$ the projection of the stochastic discounting factor in the state variables $(S(t), r(t))$.
The solution provides effective and sufficient information for the calibration and pricing.  The PDE solver is constructed by using ADI (Alternative
Direction Implicit) method based on an extension of the Peaceman-Rachford scheme. 
Furthermore, an efficient algorithm to compute all the corrective terms in the local volatility function due to the stochastic interest rates is proposed by using the PDE solutions and grid points.  Different numerical experiments are examined and compared to demonstrate the results of our theoretical analysis.
\end{abstract}

\begin{keyword}
local volatility model; stochastic interest rates; hybrid, calibration; forward Fokker-Planck type equation; alternating direction implicit (ADI) method
\end{keyword}

\end{frontmatter}

\section{Introduction}


In quantitative finance, the local volatility type model as introduced in \cite{DerKa98,Dupire94,Rubi94} is widely used to model the price of underlying in order to capture the market volatility skew or smile in equity or foreign exchange market.  It is known that with deterministic rates, the local volatility function can be obtained with the Dupire formula (see equation (\ref{LVDetIR})) by using the European call and put option prices. \\

For long maturity options (e.g pure equity autocall derivative) or some hybrid products 
with a payoff involving interest rate and underlying asset like a best-of interest rate-equity which pays coupons of the form 
\begin{equation}
\max \left[ LIBOR, a \left(\frac{S_t}{S_0}-1\right) \right],
\end{equation}  
where the interest rates is potentially needed to be modelled as stochastic. It is then natural to extend the local volatility model to incorporate stochastic interest rates. This modeling framework is widely used in the financial industry (see e.g \cite{Atlan06,Clark11,GobetBenMiri12,MarcusBermudezBuehler07}). To perfectly match the market implied volatility, the local volatility type formula can be derived and is given by equation (\ref{LVStoIR}).  We observe on top of the Dupire local volatility function, there is an additional corrective term taking into account the covariance between the equity price and short rate.  Unfortunately, this extension of the {\it{Dupire}} formula is not easily applicable for calibration over the market since there seems no immediate way to link the expectation term with the European option prices or other liquid products.  \\

The main challenge for the implementation of the model consists in the calibration (see also discussion in \cite{Piterbarg06}). 
First and foremost, it corresponds to a two factor models. Also the formula (\ref{LVStoIR}) requires a correction term
on top of {\it{Dupire}} expression for each strike point $K$ per maturity in the definition of the local volatility function.
Accuracy, robustness in the   calibration and pricing associated with an efficient implementation are required for execution in real time.  Indeed, the model will be used not only for the pricing but also to compute all sensitivity factors for hedging purpose (e.g delta, gamma and vega).\\

Many research results appeared in the last decade in different areas of quantitative finance on the local volatility model with stochastic rates.  A brief introduction is given as following:

\begin{itemize}

\item Theoretical results about the local volatility function and its calibration were exposed and discussed in 
\cite{Atlan06,DeelstraRayee12,RenMadanQian07}.

\item For the option pricing, in \cite{GobetBenMiri12,GobetHok14}, the authors developed expansion formulas by applying the perturbation method using a proxy introduced by \cite{BenGobetMiri09}. A pricing framework via Partial Differential Equation 
(PDE) approach was studied in \cite{DangChristaraJacLak12} and the Crank-Nicolson scheme also
the Alternating Direction Implicit (ADI) method were applied to build the PDE solver.

\item In terms of model calibration, 
for pricing the Power Reverse Dual Currency (PRDC) derivatives, Piterbarg in \cite{Piterbarg06} modeled the local volatility function for the forward foreign exchange rate using the constant elasticity variance (CEV) dynamic as a parametric form.  A fast calibration procedure based on the 
so-called Markovian projection method was developed and the skew averaging technique were discussed in \cite{Piterbarg05c,Piterbarg05d}.  The calibration essentially captures the slope of the implied volatility surface but does not exactly fit its convexity. In \cite{MarcusBermudezBuehler07}, the authors proposed a PDE calibration method which bootstraps the local volatility function for each expiry $T_i$ by solving a forward PDE for the joint distribution of $(S, r)$ under the forward measure $Q_{T_i}$.  Depending on the number of expiries in the local volatility calibration, the algorithm is potentially very intensive in computations. Calibration by Monte Carlo approach using McKean’s particle method was studied in \cite{GuyonLabordere11}. The authors in \cite{BenGruz08} used Malliavin calculus to derive an equation of the local volatility function, 
which is then solved by using fixed-point method.  However the calibration quality     
deteriorates significantly for far out of the money or long maturities as discussed in \cite{GuyonLabordere11}.

\item  For models calibration in quantitative finance, different numerical solvers were constructed to solve similar types of high dimensional PDEs.  For example, an idea of using operator splitting method was provided by Ren \cite{ren2007calibrating} for the calibration of local volatility with stochastic volatility.  A Heston-like term-structure model in FX market was examined by Tian \cite{tian2015calibrating} with a modified Douglas scheme and Wyns \cite{wyns2017adjoint} with a modified Craig-Sneyd scheme.  These ADI-type schemes were discussed in detailed for dealing with convection-diffusion equations in terms of the stability and second order convergence in \cite{wyns2016convergence}.

\end{itemize}

In this paper, we assume a Markovian setting with a stochastic differential equation (SDE) given by (\ref{OriginalSDEs}), i.e., the underlying $S$ follows a local volatility type diffusion and the short rate $r(t)$ is represented by a general form dynamic.  This model covers a particular case, a local volatility diffusion for $S$ associated with a Gaussian Hull-White dynamic for $r$ (see \cite{HullWhite93}), which is widely used for pricing hybrid products (see e.g \cite{GobetBenMiri12,GobetHok14,MarcusBermudezBuehler07}).   We focus on the model calibration and propose a methodology for an efficient implementation using a PDE approach.  The idea starts with introducing the projection of the stochastic discounting factor in the state variable $(S(t), r(t))$, quantity defined by the functional $\mathcal{Z}(t, S(t), r(t))$ in equation (\ref{zetafunction}).  Using a martingale 
argument, we derive the forward partial differential equation satisfied by the product $P(t, S, r) \cZ (t, S, r)$, where $P(t, S, r)$ is the probability density of $(S(t), r(t))$ (see our main result in proposition \ref{FPDEforPZ}).  The solution of the PDE can substantially help us to obtain the corrective terms and perform the pricing.  For getting the numerical PDE solution efficiently, a simple and intuitive ADI scheme based on an extension of the Peaceman-Rachford scheme in \cite{PeacemanRachford55} is proposed which leads to solve the discretized linear system with a reasonable matrix size.  Moreover, we suggest an efficient algorithm to compute all the corrective terms sequentially in strike per maturity using the solutions and points in the PDE grid (see section \ref{EfficientImplementation}). Generally in the two dimensional case, we believe our methodology 
is able to make PDE approach very efficient w.r.t Monte Carlo method (see discussion in \cite{GuyonLabordere11}).\\

The paper is organized as follows. Section 2 describes the hybrid equity-interest rates modelling and the derivation of the local volatility function is explained. The calibration framework is described in section 3 to
obtain the forward equation satisfied by $P(t, S, r) \cZ (t, S, r)$. For its resolution, an ADI-type method extending Peaceman-Rachford scheme is constructed in section 4. Section 5 is devoted to the numerical tests. Finally the conclusions and discussions are given in section 6.

\section{Hybrid equity interest rates model}\label{sec:Hybrid equity interest rates model}

Let's consider the 2-d stochastic differential equations, describing the spot price $S(t)$ and short rates $r(t)$ under the risk neutral probability $\Q$, defined by

\begin{equation} \label{OriginalSDEs}
\left\lbrace
\begin{array}{l}
	\frac{dS(t)}{S(t)} = r(t)dt + \sigma(t, S(t))dW^1(t) , \hspace{0.3cm} S(0) = S_0 ,\\
	dr(t) = \mu(t,r(t))dt + \alpha(t,r(t)) (\rho dW^1(t) + \sqrt{1-\rho^2} dW^2(t)), \hspace{0.3cm} r(0) = r_0, \\
	\end{array}
\right.
\end{equation}
where $(W(t))_{t \geq 0}$ is a standard Brownian motion in $\R^2$ on a filtered probability space
$(\Omega, \cF, (\cF_t)_{t \geq 0}, \Q)$ with the usual assumptions on the filtration $(\cF(t))_{t \geq 0}$. We assume the existence and uniqueness of the solution for (\ref{OriginalSDEs}) (see e.g theorem 3.5.5 in \cite{LambertonLapeyre12}). \\


This model corresponds to an extension of the {\it{Dupire local volatility}} model 
(e.g \cite{DerKa98,Dupire94,Rubi94}) which allows stochastic interest rates.
The local volatility function $\sigma(t,S)$  allows the model to calibrate to the surface of European call price 
$C(T, K)$ where $K$ represents the strike and $T$ is the maturity. For the sake of completeness, we provide its derivation by following the methodology in \cite{MusielaRut05}. Let's note by 

\begin{equation} \label{discountingfactor}
Z(t) := e^{-\int_0^t r(u)du},
\end{equation}
and applying Tanaka's formula to the convex but non-differentiable function $Z(t)(S(t)-K)_+$ leads to 

{\small{
\begin{align}
Z(t)(S(t)-K)_+ & =  (S(0)-K)_+ - \int_0^T r(u) Z(u)(S(u)-K)_+du  +\int_0^T Z(u) 1_{ S(u) \geq K }dS(u) +
				 \frac{1}{2} \int_0^T Z(u)dL_u^K(S),
\end{align}   
}}
where $L_u^K(S)$ is the local time of $S$, and $(S(0)-K)_+ = \max(S(0)-K,0)$. Since $S$ is a continuous semimartingale, then 
almost-surely (see e.g \cite{RevuzYor01})

\begin{equation}
L(t)^K(S) = \lim_{\epsilon \searrow 0} \frac{1}{\epsilon} \int_0^t 1_{[K, K+\epsilon](S(u))d<S,S>_u}.
\end{equation}

Using (\ref{OriginalSDEs}), it comes

\begin{align}
Z(t)(S(t)-K)_+ & =  (S(0)-K)_+ - \int_0^T r(u) Z(u)(S(u)-K)_+du  +\int_0^T Z(u) 1_{ S(u) \geq K }r(u)S(u)du 
				 \\	\nonumber
			&	+\int_0^T Z(u) 1_{ S(u) \geq K }S(u) \sigma(u, S(u))dW^1_u +\frac{1}{2} \int_0^T Z(u)dL_u^K(S)\\ \nonumber
			&    = (S(0)-K)_+ + \int_0^T K Z(u) 1_{ S(u) \geq K }r(u)du + \int_0^T Z(u) 1_{ S(u) \geq K }S(u) 				\sigma(u, S(u))dW^1_u \\ \nonumber
			 &    + \frac{1}{2} \int_0^T Z(u)dL_u^K(S) \\ \nonumber
			&    = (S(0)-K)_+ + \int_0^T K Z(u) 1_{ S(u) \geq K }(r(u)-f(0,u))du + \int_0^T K Z(u) 1_{ S(u) \geq K }f(0,u)du \\ 
			&+ \int_0^T Z(u) 1_{ S(u) \geq K }S(u)	\sigma(u, S(u))dW^1_u + \frac{1}{2} \int_0^T Z(u)dL_u^K(S), \label{introducefwd}
\end{align}
where $f(0,u) = -\frac{\partial}{\partial u} \log(ZC(0, u))$ means the forward rate at time $0$ for investing at time $u$ and $ZC(t, T)$ is the zero coupon price at $t$ for maturity $T$. We introduce the forward rate in the last equality in order to compare below the expression of the local volatility when interest rate is stochastic, $\sigma^2(T,K)$, and when interest rate is deterministic $\sigma^2_{Dup}(T,K)$.\\

Assuming that the function $Z(u) 1_{ S(u) \geq K }S(u) \sigma(u, S(u))$ is a member of the class $\mathcal{V}$ (see def 3.1.4 in \cite{Oksendal03}), namely the measurable and adapted functions $f$ s.t $E[\int_0^t f^2(s)ds] < \infty$,  and by taking the expectation

\begin{align}
C(T,K) & =  C(0,K) + K \int_0^T E[ Z(u) 1_{ S(u) \geq K }(r(u)-f(0,u)) ] du + K \int_0^T E[ Z(u) 1_{ S(u) \geq K }f(0,u)]du \\ \nonumber
&  + \frac{1}{2} K^2 \int_0^T E[Z(u) \delta(S(u)-K)\sigma^2(u,K)],
\end{align}
with $\delta(.)$ the delta function at $0$. Differentiating w.r.t T leads to

 \begin{align}\label{CT}
C_T(T,K) & =  K E[ Z(T) 1_{ S(T) \geq K }(r(T)-f(0,T)) ] + K E[ Z(T) 1_{ S(T) \geq K }f(0,T)] \\ \nonumber
&  + \frac{1}{2} K^2 E[Z(T) \delta(S(T)-K)\sigma^2(T,K)].
\end{align}

Standard computations give

\begin{align}
E[Z(T)1_{ S(T) \geq K }] = &-\frac{\partial C(T,K)}{\partial K}, \\
E[Z(T) \delta(S(T)-K)] = & \frac{\partial^2 C(T,K)}{\partial K^2}.
\end{align} 

Using the last two equations in (\ref{CT}), we obtain the expression of the local volatility $\sigma^2(T,K)$
in terms of call prices $C(T, K)$

\begin{equation}\label{LVStoIR}
\sigma^2(T,K) = \sigma^2_{Dup}(T,K) - \frac{E[Z(T) (r(T) - f(0, T))1_{S(T) > K}]}{\frac{1}{2} K \frac{\partial^2 C(T,K)}{\partial K^2}},
\end{equation}
with 

\begin{equation}\label{LVDetIR}
\sigma^2_{Dup}(T,K) = \frac{ \frac{\partial C(T,K)}{\partial T} + Kf(0, T)\frac{\partial C(T,K)}{\partial K} }{\frac{1}{2} K^2 \frac{\partial^2 C(T,K)}{\partial K^2}}.
\end{equation}

$\sigma_{Dup}(T,K)$ represents the {\it{Dupire}} local volatility function when interest rates are deterministic.
Equation (\ref{LVStoIR}) shows the corrections to employ on the tractable {\it{Dupire}} local
volatility surface in order to obtain the local volatility surface which takes into account the effect of
stochastic interest rates. $E[Z(T) (r(T) - f(0, T))1_{S(T) > K}]$ represents the numerator of the extra term in the local volatility expression. No closed form solution exists and it is not directly related to European call prices or other liquid products. Its calculations need to be estimated. For sanity check, when interest rate becomes deterministic, we have $r(T) = f(0, T)$ (see equation (\ref{rTQT})) and  $\sigma^2(T,K)$ reduces to $\sigma^2_{Dup}(T,K)$ in equation (\ref{LVStoIR}).

\remark{ Under $T-$forward measure $\Q^T$, the extra term can be written as

\begin{equation}\label{extratermQT}
E[Z(T) (r(T) - f(0, T))1_{S(T) > K}] = ZC(0,T)E^T[(r(T) - f(0, T))1_{S(T) > K}].
\end{equation}

In the Heath–Jarrow–Morton (HJM) framework under $\Q^T$ (see e.g \cite{BrigoMercurio06}), the forward rate $f(t,T)$ is a martingale, with a vector of volatility $\sigma(s,T)$ which can be written as

 \begin{equation}
f(t,T) = f(0,T) + \int_0^t \sigma(s,T).dW^{T}_s.
\end{equation}

Using the fact that $r(T) = f(T,T)$, it becomes

 \begin{equation}\label{rTQT}
r(T) = f(0,T) + \int_0^T \sigma(s,T).dW^{T}_s,
\end{equation}
and with (\ref{rTQT}), the expression of the extra term in (\ref{extratermQT}) shows clearly the impact of stochastic rates. \\

By assuming $\int_0^T \sigma^2(s,T)ds < +\infty$, $\int_0^T \sigma(s,T).dW^{T}_s$ is a $\Q^T$ martingale and 

\begin{equation}
E^T[(r(T) - f(0, T))1_{S(T) > K}] = Cov^T[(r(T) - f(0, T)), 1_{S(T) > K}],
\end{equation}
which provides an interpretation to the corrective term as a covariance, under $\Q^T$,  between  
$r(T) - f(0, T)$ and $1_{S(T) > K}$.

}

\section{Calibration}
Before using a model to price any derivatives, we usually calibrate it on the vanilla market which
means that it is able to price vanilla options with the concerned model and the resulting implied
volatilities match the market-quoted ones.  More precisely, it is necessary to determine all parameters
presenting in the different stochastic processes which define the model.  In such a way, all the European
option prices derived in the model are as consistent as possible with the corresponding market ones.\\

The calibration procedure for the two-factor model with local volatility can be decomposed into
three steps: 

\begin{itemize}
\item Parameters present in the one-factor dynamic for the interest rates, $\mu(t, r(t))$ and $\alpha(t,r(t))$, are chosen to match European swaption / cap-foors values. Methods for doing so are well developed in the literature (see
e.g \cite{BrigoMercurio06}). 

\item The correlation parameter $\rho$ is typically chosen either by historical estimation
or from occasionally observed prices of hybrid product involving interest rate and the underlying spot (see discussion in \cite{GobetHok14}).

\item After these two steps, the calibration problem
consists in finding the local volatility function $\sigma(t, S(t))$  which is consistent with its associated implied volatility surface.

\end{itemize}

Here we focus on the third step of the calibration and propose to use some martingales properties of the model to allow an efficient implementation.\\

Let's introduce the projection of the discount factor on the state variables $(S(t), r(t))$, defined as 
\begin{equation} \label{zetafunction}
\mathcal{Z}(t) := E[Z(t) | S(t), r(t)] = \mathcal{Z}(t, S(t), r(t)),
\end{equation}
with $Z(t)$ given in (\ref{discountingfactor}) and assume that $\mathcal{Z} \in \cC^{1, 2}$ on $[0, T] \times \R^2$. \\

Our objective is to determine the product functional $P(t, S, r) \cZ(t,S,r)$ where 
$P(t, S, r)$ represents the joint distribution of $(S(t), r(t))$. Indeed, we write

\begin{align} 
E[e^{-\int_0^T r(s)ds} (r(T) - f(0, T))1_{S(T) > K}] &= E[E[e^{-\int_0^T r(s)ds} (r(T) - f(0, T))1_{S(T) > K} / S(t), r(t)]] \\
 &= E[ \cZ(T, S(t), r(t)) (r(T) - f(0, T))1_{S(T) > K}] \\ 
 &= \int \left( r - f(0, T) \right) 1_{ S \geq K} (P\mathcal{Z})(T, S, r) dSdr.
 \label{Extra_term_evaluation}
\end{align}

We can then compute the corrective term (\ref{LVStoIR}) at least numerically.
As $(S(t), r(t))$ are the model state variables, $P(t, S, r) \cZ(t,S,r)$ can also be used for option pricing.\\

For any fix $T > 0$ and $h(S,r)$ a Borel-measurable function, let's define the function

\begin{equation}\label{fdefinition}
f(t, S, r) = E^{t, S, r}[e^{-\int_t^T r(s)ds}h(S(T), r(T))],
\end{equation}

\vspace{0.3cm}

where we assume  $E^{t, S, r}|h(S(T), r(T))| < +\infty$ for all $t, S, r$.\\

Using martingale argument as for the discounted Feynman-Kac theorem (see theorem 6.4.3 in \cite{Shreve04}), we derive
the partial differential equation satisfied by  $f(t, S, r)$:
\begin{align}\label{PDEf}
f_t + rSf_s + \mu f_r + \frac{1}{2}S^2 \sigma^2 f_{ss} + \frac{1}{2} \alpha^2 f_{rr} 
+ \rho \sigma \alpha S f_{sr} - rf = 0,
\end{align}
with the terminal condition

\begin{equation}\label{TerminalCondition}
f(T, S, r) = h(S, r) \: \: \forall (S, r).
\end{equation}

In the risk neutral pricing framework, we have the following result

\begin{proposition}
For $\mathcal{Z}(t)$ and $f$ defined, respectively, as in (\ref{zetafunction}) and in (\ref{fdefinition}), $\mathcal{Z}(t) f(t, S(t), r(t))$ is a martingale.
\end{proposition}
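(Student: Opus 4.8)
\noindent\textit{Proof strategy.} The plan is to imitate the martingale argument behind the discounted Feynman--Kac theorem, feeding in the PDE~(\ref{PDEf}) for $f$, the Markov property of $(S,r)$, and the defining identity~(\ref{zetafunction}) for $\mathcal{Z}$. I would first establish the standard fact that the \emph{fully} discounted process $M(t):=Z(t)\,f(t,S(t),r(t))$ is an $(\mathcal{F}_t)$-martingale. One route is It\^o's formula applied to $Z(t)\,f(t,S(t),r(t))$: since $dZ(t)=-r(t)Z(t)\,dt$ has no martingale part, the $dt$-term equals $Z(t)$ times the left-hand side of~(\ref{PDEf}) evaluated along $(S(t),r(t))$ and therefore vanishes, while the integrability conditions in force turn the surviving stochastic integrals into genuine martingales. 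Equivalently and more directly, the Markov property applied to~(\ref{fdefinition}) gives $f(t,S(t),r(t))=E\big[e^{-\int_t^T r(s)\,ds}h(S(T),r(T))\mid\mathcal{F}_t\big]$, whence $M(t)=E\big[Z(T)h(S(T),r(T))\mid\mathcal{F}_t\big]$ is a closed Doob martingale.

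The second step brings in $\mathcal{Z}$. Since $f(t,S(t),r(t))$ is $\sigma(S(t),r(t))$-measurable and $\sigma(S(t),r(t))\subseteq\mathcal{F}_t$, projecting $M(t)$ onto the current state gives
\[
E\big[M(t)\mid S(t),r(t)\big]=f(t,S(t),r(t))\,E\big[Z(t)\mid S(t),r(t)\big]=\mathcal{Z}(t,S(t),r(t))\,f(t,S(t),r(t)),
\]
and by the tower property the left-hand side also equals $E\big[Z(T)h(S(T),r(T))\mid S(t),r(t)\big]$. So the proposition reduces to the assertion that $t\mapsto E\big[Z(T)h(S(T),r(T))\mid S(t),r(t)\big]$ is a martingale. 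For $s\le t$ this quantity is a Borel function of the current state $(S(t),r(t))$, so the Markov property collapses $E[\,\cdot\mid\mathcal{F}_s]$ to $E[\,\cdot\mid S(s),r(s)]$, and what remains is to identify $E\big[\mathcal{Z}(t)f(t,S(t),r(t))\mid S(s),r(s)\big]$ with $\mathcal{Z}(s)f(s,S(s),r(s))=E\big[Z(T)h(S(T),r(T))\mid S(s),r(s)\big]$.

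I expect this last identification to be the crux. The state $\sigma$-algebras $\sigma(S(s),r(s))$ and $\sigma(S(t),r(t))$ are not nested, so the ordinary tower property does not close the argument; one has to push the conditioning through the intermediate state using the Markov structure of $(S,r)$ (conditional independence of past and future given the present) applied to the factorisation $Z(T)=Z(t)\,e^{-\int_t^T r(s)\,ds}$, and one must check that the path-dependent factor $Z(t)=e^{-\int_0^t r(s)\,ds}$, which is not itself a function of $(S(t),r(t))$, does not obstruct this reduction. I would therefore try to carry the whole argument through $M$: write $\mathcal{Z}(t)f(t,S(t),r(t))=E[M(t)\mid S(t),r(t)]$, and combine the $(\mathcal{F}_t)$-martingale property of $M$, the inclusions $\sigma(S(s),r(s))\subseteq\mathcal{F}_s\subseteq\mathcal{F}_t$, and the Markov reduction, so that everything rests on the martingale property of $M$ together with~(\ref{zetafunction}). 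The assumed $C^{1,2}$ regularity of $\mathcal{Z}$ and the integrability hypothesis on $h$ are what guarantee that the conditional expectations and stochastic-integral remainders involved are well defined and genuinely (not merely locally) martingales.
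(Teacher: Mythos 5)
Your first two steps are correct and coincide with what the paper does implicitly: $M(t):=Z(t)f(t,S(t),r(t))=E[Z(T)h(S(T),r(T))\mid\cF_t]$ is a closed (Doob) martingale, and projecting onto the current state while pulling out the $\sigma(S(t),r(t))$-measurable factor $f(t,S(t),r(t))$ gives $E[M(t)\mid S(t),r(t)]=\cZ(t,S(t),r(t))\,f(t,S(t),r(t))$. But your outline stops exactly where the proposition begins. What you still owe is the identity $E\bigl[E[M(t)\mid S(t),r(t)]\mid\cF_s\bigr]=E[M(t)\mid\cF_s]$, which amounts to $E[Z(t)\mid S(t),r(t),S(s),r(s)]=E[Z(t)\mid S(t),r(t)]$; you correctly flag this as the crux and then leave it unproved. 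That is a genuine gap, and it cannot be closed as stated: $Z(t)=e^{-\int_0^t r(u)du}$ is a functional of the whole past trajectory, and the earlier state carries information about that trajectory beyond what the current state does, so the required conditional independence fails. The mechanism is already visible in the one-dimensional analogue $r=W$ a Brownian motion: $E[e^{-\int_0^T W_u\,du}\mid W_t]\propto e^{-(T-t/2)W_t}$, hence $E\bigl[E[e^{-\int_0^T W_u\,du}\mid W_t]\mid W_s\bigr]\propto e^{-(T-t/2)W_s}$, which has a different exponential slope from $E[e^{-\int_0^T W_u\,du}\mid W_s]\propto e^{-(T-s/2)W_s}$ when $s<t$. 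A process of the form $E[V\mid S(t),r(t)]$ for a path functional $V$ is in general \emph{not} an $(\cF_t)$-martingale, so no amount of Markov-property bookkeeping along the lines you sketch will finish the argument.

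In fairness, the paper's own proof makes the identical leap: the passage from its second displayed identity to its third, where $\cF_t$ is replaced by $\sigma(S(t),r(t))$ inside the outer conditioning and $\cF_s$ by $\sigma(S(s),r(s))$, is not a tower-property step (the $\sigma$-algebras are not nested) but precisely the unproved assertion you isolated. What does survive, and what is all that the proof of Proposition \ref{FPDEforPZ} actually uses in equation (\ref{couplepz}), is the constancy of the unconditional mean:
\begin{equation*}
E[\cZ(t)f(t,S(t),r(t))]=E\bigl[E[M(t)\mid S(t),r(t)]\bigr]=E[M(t)]=f(0,S(0),r(0)),
\end{equation*}
which follows from your step two and the tower property alone. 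If you want a correct deliverable, prove and invoke that weaker statement; the full $(\cF_t)$-martingale claim would require an additional structural hypothesis (conditional independence of $Z(t)$ from $\cF_s$ given $(S(t),r(t))$) that the model does not supply.
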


\begin{proof}

Using the martingality of $Z(t)f(t, S(t), r(t))$ and the Markov property of solutions in (\ref{OriginalSDEs}), for $t \geq s$, we write

\begin{align}
E[Z(t)f(t, S(t), r(t))  / \cF_s ] & =  Z_sf(s, S_s, r_s),  \\
E[ E[Z(t)f(t, S(t), r(t)) / \cF_t] /  \cF_s ] & =  E[ Z_sf(s, S_s, r_s)   / \cF_s ],\\
E[ E[Z(t)f(t, S(t), r(t)) / S(t), r(t)] / \cF_s ] & =   E[ Z_sf(s, S_s, r_s)   /  S_s, r_s],\\
E[ \mathcal{Z}(t) f(t, S(t), r(t))/ \cF_s ] & =  \mathcal{Z}_s f(s, S_s, r_s).
\end{align}

\end{proof}

Since the martingale $\mathcal{Z}(t, S(t), r(t)) f(t, S(t), r(t))$ is an It\^o process, it must have zero drift. 
Calculating the drift term using It\^o formula and setting it to zero give the following corollary:

\begin{corollary}  $\mathcal{Z}(t,S,r) f(t,S,r)$ is a solution of the following partial differential equation 
\begin{align}\label{PDEZf}
( \cZ f)_t + rS (\cZ f)_s + \mu (\cZ f)_r + \frac{1}{2}S^2 \sigma^2 (\cZ f)_{ss} + \frac{1}{2} \alpha^2 (\cZ f)_{rr} 
+ \rho \sigma \alpha S(\cZ f)_{sr} = 0.
\end{align}
\end{corollary}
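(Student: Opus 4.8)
The plan is to combine the Proposition with It\^o's formula, exactly along the lines indicated by the phrase ``calculating the drift term using It\^o formula and setting it to zero.'' Set $g(t,S,r) := \cZ(t,S,r)\,f(t,S,r)$. By the standing assumption $\cZ \in \cC^{1,2}$ on $[0,T]\times\R^2$ and the regularity of $f$ inherited from the linear parabolic equation (\ref{PDEf}) (under the usual ellipticity and smoothness hypotheses on $\mu,\alpha,\sigma$), $g$ is $\cC^{1,2}$, so It\^o's formula can be applied to the process $g(t,S(t),r(t))$ driven by (\ref{OriginalSDEs}).

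First I would expand $d\,g(t,S(t),r(t))$ via It\^o's formula for the two-dimensional diffusion (\ref{OriginalSDEs}), using $d\langle S\rangle_t = \sigma^2 S^2\,dt$, $d\langle r\rangle_t = \alpha^2\,dt$ and $d\langle S,r\rangle_t = \rho\,\sigma\,\alpha\,S\,dt$. The $dt$-coefficient that appears is
\begin{equation}\label{driftofg}
\mathcal{D}g := g_t + rS\,g_s + \mu\,g_r + \tfrac12 S^2\sigma^2\,g_{ss} + \tfrac12 \alpha^2\,g_{rr} + \rho\,\sigma\,\alpha\,S\,g_{sr},
\end{equation}
evaluated at $(t,S(t),r(t))$, while the remaining terms form stochastic integrals against $dW^1$ and $dW^2$. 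Hence $g(t,S(t),r(t))$ is a continuous semimartingale whose finite-variation part is $\int_0^{t} (\mathcal{D}g)(u,S(u),r(u))\,du$.

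Next, the Proposition tells us that $g(t,S(t),r(t)) = \cZ(t)\,f(t,S(t),r(t))$ is a martingale. A continuous process that is simultaneously a martingale and a continuous semimartingale has a unique canonical (Doob--Meyer) decomposition, so its finite-variation part vanishes identically; i.e.\ for every $0 \le s \le t \le T$,
\[
\int_s^t (\mathcal{D}g)(u,S(u),r(u))\,du = 0 \qquad \text{a.s.}
\]
Since $u \mapsto (\mathcal{D}g)(u,S(u),r(u))$ is continuous, dividing by $t-s$ and letting $t\downarrow s$ gives $(\mathcal{D}g)(s,S(s),r(s)) = 0$ for a.e.\ $s$, almost surely. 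Because the diffusion is non-degenerate, the law of $(S(s),r(s))$ is supported on the whole state space, so the points $(s,S(s),r(s))$ are dense there; as $\mathcal{D}g$ is continuous in $(t,S,r)$, it must vanish at every interior point, which is exactly (\ref{PDEZf}).

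The step I expect to be the main obstacle is this last one: upgrading ``the drift vanishes along almost every trajectory'' to ``$\mathcal{D}g \equiv 0$ pointwise on the domain.'' It relies on (i) enough integrability (the stochastic integrands should be of class $\mathcal{V}$, paralleling the hypothesis made for the derivation of (\ref{LVStoIR})) so that the martingale property and the uniqueness of the semimartingale decomposition are legitimate, and (ii) a support/continuity argument turning the almost-sure identity along paths into an everywhere identity. Both are routine under the standing non-degeneracy and regularity assumptions, but they are where the care must be taken. A purely analytic alternative would be to first establish the backward PDE for $\cZ$ and then invoke the Leibniz product rule together with (\ref{PDEf}); however, since $\cZ(t) = E\big[e^{-\int_0^t r(u)\,du}\,\big|\,S(t),r(t)\big]$ is backward-looking and not of the form (\ref{fdefinition}), this would require separately deriving its equation, so the martingale route above is more economical.
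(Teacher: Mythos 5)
Your proposal is correct and follows essentially the same route as the paper: the paper's entire argument is the sentence preceding the corollary, namely that the martingale $\mathcal{Z}(t,S(t),r(t))f(t,S(t),r(t))$ is an It\^o process whose drift, computed by It\^o's formula, must vanish, and that drift is exactly the operator in (\ref{PDEZf}). You merely supply the details the paper leaves implicit (the quadratic covariations, the uniqueness of the semimartingale decomposition, and the support argument upgrading the almost-sure pathwise identity to a pointwise PDE), which is a faithful and more careful rendering of the same argument.
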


Next proposition provides the forward equation satisfied by $P(t, S, r) \cZ (t, S, r)$.

\begin{proposition} \label{FPDEforPZ}
Assume the probability density $P(t, S, r)$ and its derivatives decay fast enough to $0$ for large 
$|(S,r)|$  to preclude boundary terms, i.e., for $|(S,r)| \to \infty$ 

\begin{equation}\label{fastdecayboundaryterms}
\begin{array}{ccc}
\mu P\cZ  = \sigma^2 s^2 P\cZ = \alpha^2 P\cZ =  s\sigma \alpha P\cZ &= 0 ,\\
 \frac{\partial(\sigma^2 s^2 P\cZ)}{\partial S} = \frac{\partial(\alpha^2P\cZ)}{\partial r} = 
 \frac{\partial(s \sigma \alpha P\cZ)}{\partial r} = \frac{\partial(s \sigma \alpha P\cZ)}{\partial s} &= 0.
\end{array}
\end{equation}
Then $P\cZ$ satisfies the following forward equation with the Dirac delta function as the initial condition

\begin{equation} \label{fwdeqpz}
\left\lbrace
\begin{array}{l}
 (P \cZ )_t + (rS P\cZ )_s + (\mu P\cZ )_r  -\frac{1}{2}(s^2 \sigma^2 P\cZ )_{ss}  
-\frac{1}{2}(\alpha^2 P\cZ )_{rr} - \rho (\alpha \sigma SP \cZ)_{sr} + r (P \cZ ) = 0,\\
(P \cZ )(0, S, r) = \delta(S-S_0, r-r_0).
	\end{array}
\right.
\end{equation}
or
\begin{equation} \label{fwdeqpzexpansion}
\left\lbrace
\begin{array}{l}
 (P \cZ )_t + \left[ S(r - 2\sigma^2 - \rho\sigma  \alpha_r ) -2\sigma\sigma_S S^2  \right](P\cZ )_s + 
 \left[ \mu -2 \alpha \alpha_r - \alpha\rho(\sigma +S \sigma_s) \right](P\cZ )_r  \\
  -\frac{1}{2}S^2 \sigma^2 (P\cZ)_{ss} -\frac{1}{2}\alpha^2 (P\cZ )_{rr} - \rho \alpha S\sigma (P \cZ)_{sr} \\
  + \left[ 2r + \mu_r - (\sigma^2 + 4S \sigma \sigma_s + S^2(\sigma_{ss} \sigma + \sigma_s^2)) 
  - (\alpha \alpha_{rr} + \alpha_r^2) - \rho \alpha_r (\sigma + S \sigma_s) \right] (P \cZ ) = 0,\\
(P \cZ )(0, S, r) = \delta(S-S_0, r-r_0).
	\end{array}
\right.
\end{equation}

\end{proposition}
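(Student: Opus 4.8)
The plan is to derive the forward (Fokker--Planck type) equation for $P\cZ$ by duality against the backward equation from the Corollary. Concretely, fix $T>0$ and a test function $h(S,r)$, and let $f(t,S,r)$ be the solution of the backward PDE \eqref{PDEZf} satisfied by $g := \cZ f$, i.e.
\begin{equation*}
g_t + rS\, g_s + \mu\, g_r + \tfrac12 S^2\sigma^2 g_{ss} + \tfrac12\alpha^2 g_{rr} + \rho\sigma\alpha S\, g_{sr} = 0,
\qquad g(T,\cdot) = h .
\end{equation*}
On the other hand, by the tower property and the definitions \eqref{zetafunction}, \eqref{fdefinition}, one has the identity
\begin{equation*}
\int g(t,S,r)\,(P\cZ)(t,S,r)\,dS\,dr
= E\big[\cZ(t)f(t,S(t),r(t))\big]
= E\big[Z(t)f(t,S(t),r(t))\big]
= E\big[Z(T)h(S(T),r(T))\big],
\end{equation*}
which is independent of $t\in[0,T]$. (The second equality uses \eqref{zetafunction}; the third is the martingality of $Z(t)f(t,S(t),r(t))$, the discounted Feynman--Kac property already invoked for \eqref{PDEf}.) Hence $\frac{d}{dt}\int g\,(P\cZ)\,dS\,dr = 0$.

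Next I would differentiate that integral under the integral sign:
\begin{equation*}
0 = \int g_t\,(P\cZ)\,dS\,dr + \int g\,(P\cZ)_t\,dS\,dr .
\end{equation*}
Substitute $g_t = -\big(rS g_s + \mu g_r + \tfrac12 S^2\sigma^2 g_{ss} + \tfrac12\alpha^2 g_{rr} + \rho\sigma\alpha S g_{sr}\big)$ from the backward equation, and then move all derivatives off $g$ and onto $P\cZ$ by integration by parts in $S$ and $r$. The decay hypotheses \eqref{fastdecayboundaryterms} are exactly what is needed to kill every boundary term generated at each integration by parts (the first line of \eqref{fastdecayboundaryterms} handles the single-integration terms from the first-order part and the cross term, the second line handles the boundary terms from the second integration by parts on the pure second-order terms and the mixed term). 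This yields
\begin{equation*}
0 = \int g\,\Big[ (P\cZ)_t + (rSP\cZ)_s + (\mu P\cZ)_r - \tfrac12(S^2\sigma^2 P\cZ)_{ss} - \tfrac12(\alpha^2 P\cZ)_{rr} - \rho(\alpha\sigma S P\cZ)_{sr} + r\,(P\cZ)\Big]\,dS\,dr .
\end{equation*}
Wait --- I must be careful: the term $r\,(P\cZ)$ appears because the backward operator in \eqref{PDEZf} has no zero-order term, whereas the operator in \eqref{PDEf} does; tracking this correctly is a matter of bookkeeping, and the cleaner route is actually to integrate by parts directly in the representation $\frac{d}{dt}E[\cZ(t)f] = 0$ using the Corollary's divergence-free form, so that no spurious $r$-term is introduced and the genuine $r(P\cZ)$ term arises solely from rewriting $(rSP\cZ)_s = rP\cZ + (\text{drift-in-}s\text{ terms})$ versus the symmetric split; I would reconcile the two and confirm the sign of the zero-order coefficient. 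Since $h$ (hence $g(T,\cdot)=h$, and by uniqueness $g(t,\cdot)$ spans a dense set as $T$ and $h$ vary) is arbitrary, the bracket must vanish identically, giving \eqref{fwdeqpz}. The initial condition $(P\cZ)(0,\cdot) = \delta(S-S_0,r-r_0)$ follows from $\cZ(0)=Z(0)=1$ and $P(0,\cdot)=\delta(\cdot-(S_0,r_0))$, the law of the deterministic start $(S_0,r_0)$. Finally \eqref{fwdeqpzexpansion} is obtained from \eqref{fwdeqpz} by expanding the derivatives $(\cdot)_s$, $(\cdot)_r$, $(\cdot)_{ss}$, $(\cdot)_{rr}$, $(\cdot)_{sr}$ with the product and chain rules, using $\sigma=\sigma(t,S)$ (so $\sigma_r=0$) and collecting the coefficients of $(P\cZ)_s$, $(P\cZ)_r$, and $(P\cZ)$ --- a purely mechanical computation.

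The main obstacle is the integration-by-parts bookkeeping: each of the five spatial terms of the backward operator must be transposed, some requiring two integrations by parts, and one must verify that the accumulated boundary terms are precisely the list in \eqref{fastdecayboundaryterms} and that the zero-order coefficient comes out as $+r$ with the correct sign. A secondary subtlety is the density/uniqueness argument needed to pass from ``the integral against every $g$ vanishes'' to ``the bracket vanishes pointwise'': one should argue that, letting $T$ and $h$ range over a rich enough class, the solutions $g(t,\cdot)$ of the backward equation form a dense family in an appropriate space, or equivalently invoke that a locally integrable function orthogonal to all such test functions must vanish. The expansion from \eqref{fwdeqpz} to \eqref{fwdeqpzexpansion}, while lengthy, presents no conceptual difficulty.
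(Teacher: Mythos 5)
Your overall strategy---test $P\cZ$ against solutions of a backward equation, integrate by parts, and use arbitrariness of the terminal data $h$---is exactly the paper's, but the central pairing in your argument is wrong, and your attempted self-correction does not fix it. You set $g=\cZ f$ (so that $g$ solves \eqref{PDEZf}) and claim
\begin{equation*}
\int g(t,S,r)\,(P\cZ)(t,S,r)\,dS\,dr \;=\; E\big[\cZ(t)f(t,S(t),r(t))\big].
\end{equation*}
This is false: since $g=\cZ f$, the left side equals $\int P\,\cZ^{2} f\,dS\,dr = E\big[\cZ(t)^{2}f(t,S(t),r(t))\big]$, which is not a conserved quantity, so the whole ``$\frac{d}{dt}(\cdot)=0$'' step collapses. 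The correct conserved pairing---and the one the paper uses---is $\int f\,(P\cZ)\,dS\,dr = E[\cZ(t)f(t,S(t),r(t))] = f(0,S_0,r_0)$, where $f$ itself (not $\cZ f$) is the test object and satisfies the backward equation \eqref{PDEf}, \emph{including} its zero-order term $-rf$. Transposing that $-rf$ term is precisely what produces the reaction term $+r(P\cZ)$ in \eqref{fwdeqpz}. Your proposed ``cleaner route'' of pairing against the divergence-free operator of \eqref{PDEZf} and recovering $r(P\cZ)$ ``solely from rewriting $(rSP\cZ)_s$'' is incorrect on both counts: pairing $P$ (not $P\cZ$) against solutions of \eqref{PDEZf} yields the plain Fokker--Planck equation \eqref{fwdeqp} for $P$ with no zero-order term, and expanding $(rSP\cZ)_s$ only redistributes first-order contributions; it cannot create the $+r(P\cZ)$ term, which has an independent probabilistic origin (the discounting in \eqref{fdefinition}).

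Two smaller issues: the terminal condition you impose, $g(T,\cdot)=h$, is inconsistent with $g=\cZ f$ (it would have to be $\cZ(T,\cdot)h$); and your passage from ``the integral against every $g$ vanishes'' to the pointwise equation rests on an unproved density claim for the family $\{g(t,\cdot)\}$, whereas the paper avoids this entirely by taking $h\in C^2_0(\R^2)$ arbitrary, letting $t\to T$ so that $f(t,\cdot)\to h$, and then using that $T$ is arbitrary. With the pairing corrected to $\int f\,(P\cZ)$ and $f$ governed by \eqref{PDEf}, the rest of your outline (integration by parts killing boundary terms via \eqref{fastdecayboundaryterms}, the Dirac initial condition from $\cZ(0)=1$, and the mechanical expansion to \eqref{fwdeqpzexpansion}) matches the paper's proof.
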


\begin{proof}

Let's note by
\begin{equation}
C_0^2(\R^2) = \{ h \in C^2(\R^2) \, \, with \, \,  compact \, \, support\}
\end{equation}
and consider $f(t, S, r)$ defined as in (\ref{fdefinition}) for $h \in C_0^2(\R^2)$.\\

Using the martingale property of $\cZ(t) f(t, S(t), r(t))$ , we write for any $t \geq 0$ 
\begin{align}
f(0, S(0), r(0)) & = E[ \cZ(t) f(t, S(t), r(t)) ]\\
			   & = \int [P(t, S, r)\cZ (t, S, r)] f(t, S, r) dSdr. \label{couplepz}
\end{align}
By taking derivative w.r.t $t$ in (\ref{couplepz}), using (\ref{PDEf}), performing integration by parts and using zero boundary conditions (\ref{fastdecayboundaryterms}), we get 
\begin{equation}
\int f [ (P \cZ )_t + (rS P\cZ )_s + (\mu P\cZ )_r  -\frac{1}{2}(s^2 \sigma^2 P\cZ )_{ss} 
-\frac{1}{2}(\alpha^2 P\cZ )_{rr} - \rho (\alpha \sigma s P \cZ)_{sr} + r (P \cZ ) ]dsdr = 0.
\end{equation} 
Note that when $t$ approaches $T$, the function $f(t, S, r)$ approaches $h(S,r)$ and the last equation becomes
\begin{equation}
\int h [ (P \cZ )_t + (rS P\cZ )_s + (\mu P\cZ )_r  -\frac{1}{2}(s^2 \sigma^2 P\cZ )_{ss} 
-\frac{1}{2}(\alpha^2 P\cZ )_{rr} - \rho (\alpha \sigma s P \cZ)_{sr} + r (P \cZ ) ]dsdr = 0.
\end{equation} 
Since $ h \in C_0^2(\R^2) $ is arbitrary, we obtain the forward equation (\ref{fwdeqpz}).\\

The Dirac delta initial condition means that at time $t = 0$ we are sure that the spot 
$S(0)$ equals $S_0$, the interest rate $r(0)$ equals $r_0$ and $\cZ(0) = 1$.
\end{proof}

\remark{We know (see e.g proposition 11.5 in \cite{AveLau00}) that $P(t, S, r)$ satisfies the forward Fokker-Planck equation given by

\begin{equation} \label{fwdeqp}
\left\lbrace
\begin{array}{l}
 (P)_t + (rS P )_s + (\mu P)_r  -\frac{1}{2}(s^2 \sigma^2 P)_{ss} 
-\frac{1}{2}(\alpha^2 P )_{rr} - \rho (\alpha \sigma SP)_{sr} = 0,\\
(P \cZ )(0, S, r) = \delta(S-S_0, r-r_0).
	\end{array}
\right.
\end{equation}

In comparison, the equation for $P \cZ $ in (\ref{fwdeqpz}) has an additional reaction term.
Also by definition, we have for all $t \geq 0$

\begin{align}
\int P(t, S, r)dSdr &=1, \\ 
\int (P\cZ)(t, S, r)dSdr &= E[ \mathcal{Z}(t, S(t), r(t)) ] = ZC(0, t). \label{IntegratePZ}
\end{align}

}

\section{Numerical methods}

The targeted equation to solve is (\ref{fwdeqpzexpansion})
with a Dirac delta function as initial condition. 
For the calibration, the space variables $(S, r)$ can potentially be defined in an  
unbounded domain typically $S \geq 0$ and $r \in \R$ (see section \ref{sec:Numerical experiments}).  This unbounded domain is actually truncated by $(S, r) \in [Smin, Smax] \times [rmin, rmax]$ in the finite difference spatial discretization where $Smin$ is close to 0, $|Smax|, |rmin|, |rmax|$ are sufficiently large in the numerical experiments (see \cite{WinForVet04}). \\

For the boundary values, we opt for Dirichlet type conditions.  
The values depend on the model considered. In our numerical experiments at section \ref{sec:Numerical experiments}, 
$r_t$ has Gaussian distribution for all $t$ and $\Q (S_t > 0, \, \forall \, t \geq 0) = 1$.
The values for $\cZ (t, S, r)$ are expected to be around 1 (see Figures \ref{figure:ZfunctionT1Y} and \ref{figure:ZfunctionT2Y}). 
Then, reasonably, the boundary values for $P\cZ(t, S, r)$ are set to be $0$. 
For the initial condition, we propose to approximate the Dirac delta function by the Gaussian kernels i.e a family of  Gaussian functions 

\begin{equation}
\gamma_{\frac{1}{N}}(S, r) =   \frac{1}{2 \pi \sqrt{ det \Sigma}} exp \left \{ -\frac{1}{2} \small{ \left(\begin{array}{c} S-S_0 \\ 
r - r_0\\
\end{array}\right)' } \Sigma^{-1}  \small{ \left(\begin{array}{c} S-S_0 \\ 
r - r_0\\
\end{array}\right) } \right \}
\end{equation}
parametrized by the parameter $N>0$ with $ \Sigma = \left(\begin{array}{cc} 
\frac{1}{N} & 0\\
0 & \frac{1}{N}
\end{array}
\right)$ (see e.g \cite{Zaid13} and \cite{SaiWo13}).\\

Constructing fully implicit scheme for solving equation (\ref{fwdeqpz}) may result some difficulties as the descritized linear system becomes banded structure and requires efficient linear solvers.  Besides if the employed grid numbers are large, the initialization of program can cause problem as the size of matrix is too large.  In the following, the details of doing time discretization of solving equation (\ref{fwdeqpz}) is explained by using Peaceman-Rachford scheme which can resolve these issues.

\subsection{ADI Solver}
Let $N_S,N_r$ be the number of grid points for doing spatial discretizations on the direction $S$ and $r$, respectively, and $N_t$ be the number of grid points employed on temporal discretization. The essential idea of Peaceman-Rachford scheme is to split the calculation in the time-marching scheme into several steps with respect to different spatial variables.  More precisely, the scheme used for evaluating $(P\cZ)^{n+1}$ from the known value $(P\cZ)^n$ can be specified as the following two steps.
\\
\\
\begin{large}
\textbf{ADI Step 1} \\
\end{large}
In the first step of ADI, the finite difference spatial discretization on $S$ is treated implicitly and the rest terms are explicitly for time step from $n$ to $n+\frac{1}{2}$, namely

\begin{align} \label{ADI_Step_1}
  \begin{split}
\frac{(P\cZ)_{ij}^{n+\frac{1}{2}} - (P\cZ)_{ij}^{n}}{\Delta t/2}
+ C_1\frac{(P\cZ)_{i+1,j}^{n+\frac{1}{2}} - (P\cZ)_{i-1,j}^{n+\frac{1}{2}}}{2\Delta S}
+ C_2\frac{(P\cZ)_{i,j+1}^{n} - (P\cZ)_{i,j-1}^{n}}{2\Delta r}\\
+ C_3\frac{(P\cZ)_{i+1,j}^{n+\frac{1}{2}} - 2(P\cZ)_{ij}^{n+\frac{1}{2}} + (P\cZ)_{i-1,j}^{n+\frac{1}{2}}}{(\Delta S)^2}
+ C_4\frac{(P\cZ)_{i,j+1}^{n} - 2(P\cZ)_{ij}^{n} + (P\cZ)_{i,j-1}^{n}}{(\Delta r)^2}\\
+ C_5\frac{(P\cZ)_{i+1,j+1}^{n}+(P\cZ)_{i-1,j-1}^{n}-(P\cZ)_{i-1,j+1}^{n}-(P\cZ)_{i+1,j-1}^{n}}{4\Delta S \Delta r} 
+ C_6 (P\cZ)_{ij}^{n+\frac{1}{2}} =0,
  \end{split}
\end{align}
where 
\[
C_1 = (r_jS_i-2S_i\sigma_i^2-2(S_i)^2\sigma_i(\sigma_S)_i-\rho\sigma_i S_i (\alpha_r)_j), \;
C_2 = (\mu_j - \rho\sigma_i\alpha_j - \rho(\sigma_S)_i S_i \alpha_j - 2 \alpha_j(\alpha_r)_j), 
\]
\[
C_3 = \left(\frac{-S_i^2\sigma_i^2}{2} \right), \; C_4 = \left(\frac{-\alpha_j^2}{2} \right),\; C_5 = \left(-\rho\sigma_i S_i \alpha_j \right),
\]
\[
C_6 = ( 2r_j+(\mu_r)_j-\sigma_i^2-4S_i\sigma_i(\sigma_S)_i -((\sigma_S)_i)^2S_i^2 - \sigma_i(\sigma_{SS})_iS_i^2 - ((\alpha_r)_j)^2 - \alpha_j(\alpha_{rr})_j-\rho(\sigma_S)_iS_i(\alpha_r)_j-\rho\sigma_i(\alpha_r)_j ),
\]
here $(P\cZ)_{ij}^{n}$ is the discretized notation of $(P\cZ)(n \Delta t,i\Delta S, j \Delta r)$ for $i = 1,...,N_S$, $j = 1,...,N_r$ and $n+\frac{1}{2}$ is a dummy time step.  Note the notations $(\sigma_S)_i, (\sigma_{SS})_i$ present that the evaluation of the sensitivities is at the discretized point $S_i$.  Similarly the notations $(\alpha_r)_j, (\alpha_{rr})_j$ show the values evaluated with the discretized point $r_j$.   Then it leads to solve 
\begin{equation} \label{Linear_system_ADI_Step_1}
H_1 (P\cZ)^{n+\frac{1}{2}} = f_1((P\cZ)^n).
\end{equation}
for each $j$, where $H_1$ is a tri-diagonal matrix with size $(N_S\times N_S)$ and entries equal to

\[
a_i = \frac{-C_1}{2\Delta S} + \frac{C_3}{(\Delta S)^2},\;
b_i = \frac{2}{\Delta t} - \frac{2C_3}{(\Delta S)^2} + C_6,\;
c_i = \frac{C_1}{2\Delta S} + \frac{C_3}{(\Delta S)^2},
\]
where $a_i, b_i, c_i$ are the entries for lower, main and upper diagonal for $i = 1,...,N_{S}$.  The right hand side value $f_1((P\cZ)^n)$ can be evaluated with the known value $(P\cZ)^n$ at current time step and is given by 

\begin{align}
f_1((P\cZ)^n) & = (P\cZ)_{ij}^{n} \left( \frac{2}{\Delta t} + \frac{2C_4}{(\Delta r)^2} \right)
		-(P\cZ)_{i,j+1}^{n} \left( \frac{C_2}{2\Delta r} + \frac{C_4}{(\Delta r)^2} \right)
		+ (P\cZ)_{i,j-1}^{n} \left( \frac{C_2}{2\Delta r} - \frac{C_4}{(\Delta r)^2} \right) \nonumber \\
		& -C_5\frac{ \left( (P\cZ)_{i+1,j+1}^{n}+(P\cZ)_{i-1,j-1}^{n}-(P\cZ)_{i-1,j+1}^{n}-(P\cZ)_{i+1,j-1}^{n} \right)}{4\Delta S \Delta r}.
\end{align}\\
\begin{large}
\textbf{ADI Step 2} \\
\end{large}
Once the solution at the dummy time step $(P\cZ)^{n+\frac{1}{2}}$ is obtained, the second step of ADI is to consider the finite difference spatial discretization on $r$ which is treated implicitly and the rest terms are explicitly for time step from $n+\frac{1}{2}$ to $n+1$ as follows
\begin{align} \label{ADI_Step_2}
  \begin{split}
\frac{(P\cZ)_{ij}^{n+1} - (P\cZ)_{ij}^{n+\frac{1}{2}}}{\Delta t/2}
+ C_1\frac{(P\cZ)_{i+1,j}^{n+\frac{1}{2}} - (P\cZ)_{i-1,j}^{n+\frac{1}{2}}}{2\Delta S}
+ C_2\frac{(P\cZ)_{i,j+1}^{n+1} - (P\cZ)_{i,j-1}^{n+1}}{2\Delta r} \\
+ C_3\frac{(P\cZ)_{i+1,j}^{n+\frac{1}{2}} - 2(P\cZ)_{ij}^{n+\frac{1}{2}} + (P\cZ)_{i-1,j}^{n+\frac{1}{2}}}{(\Delta S)^2}
+ C_{4}\frac{(P\cZ)_{i,j+1}^{n+1} - 2(P\cZ)_{ij}^{n+1} + (P\cZ)_{i,j-1}^{n+1}}{(\Delta r)^2}\\
+ C_{5}\frac{  (P\cZ)_{i+1,j+1}^{n+\frac{1}{2}}+(P\cZ)_{i-1,j-1}^{n+\frac{1}{2}}-
(P\cZ)_{i-1,j+1}^{n+\frac{1}{2}}-(P\cZ)_{i+1,j-1}^{n+\frac{1}{2}} }{4\Delta S \Delta r} 
+ C_{6} (P\cZ)_{ij}^{n+1} =0,
  \end{split}
\end{align}
for $i = 1,...,N_S$, $j = 1,...,N_r$ and $n+1$ is the time step of the pursued solution. Then it leads to solve 
\begin{equation} \label{Linear_system_ADI_Step_2}
H_2 (P\cZ)^{n+1} = f_2((P\cZ)^{n+\frac{1}{2}}),
\end{equation}
for each $i$, where $H_2$ is a tri-diagonal matrix with size $(N_r\times N_r)$ and entries equal to
\[
d_j = \frac{-C_2}{2\Delta r} + \frac{C_{4}}{(\Delta r)^2},\;
e_j = \frac{2}{\Delta t} - \frac{2C_{4}}{(\Delta r)^2} + C_{6},\;
f_j = \frac{C_2}{2\Delta r} + \frac{C_{4}}{(\Delta r)^2}.
\]
where $d_j, e_j, f_j$ are the entries for lower, main and upper diagonal for  $j = 1,...,N_{r}$. The right hand side value $f_2((P\cZ)^{n+\frac{1}{2}})$ again can be evaluated with the calculated solution $(P\cZ)^{n+\frac{1}{2}}$ at the dummy time step and is given by 

\begin{align}
f_2((P\cZ)^{n+ \frac{1}{2}}) & = (P\cZ)_{ij}^{n + \frac{1}{2}} \left( \frac{2}{\Delta t} + \frac{2C_3}{(\Delta S)^2} \right)
		-(P\cZ)_{i+1,j}^{n+ \frac{1}{2}} \left( \frac{C_1}{2\Delta S} + \frac{C_3}{(\Delta S)^2} \right)
		+ (P\cZ)_{i-1,j}^{n+ \frac{1}{2}} \left( \frac{C_1}{2\Delta S} - \frac{C_3}{(\Delta S)^2} \right) \nonumber \\
		& -C_5\frac{ \left( (P\cZ)_{i+1,j+1}^{n+ \frac{1}{2}}+(P\cZ)_{i-1,j-1}^{n+ \frac{1}{2}}-(P\cZ)_{i-1,j+1}^{n+ \frac{1}{2}}-(P\cZ)_{i+1,j-1}^{n+ \frac{1}{2}} \right)}{4\Delta S \Delta r}.
\end{align}

\remark{
\begin{itemize}
\item In the ADI steps 1 and 2, the coefficients $C_i, \, i = 1, .., 6$ are evaluated at time $t_n$. It corresponds to some approximations and allows simplifications.

\item Certain PDE schemes for solving the forward Fokker-Planck equation for 
$P(t, S, r)$ lead to some negative probabilities and are source of instability (see e.g \cite{Itkin17,sepp10}). In our numerical examples in section \ref{sec:Numerical experiments}, we have 
not seen these issues when solving the equation (\ref{fwdeqpzexpansion}) for $P\cZ$.

\item Using the concept of generator and \textit{M}-matrix in \cite{Itkin17}, Itkin proposed PDE spliting scheme s.t the integral of the discrete solution $P(t, S, r)$ is equal to $1$. Here, our solution for $(P\cZ)(t, S, r)$ should theoretically satisfy the relation (\ref{IntegratePZ}).
In our experiments in section \ref{sec:Numerical experiments}, integrating the numerical solution can give slightly different results. In that case, we normalize the numerical solution for $(P\cZ)(t, S, r)$ s.t relation (\ref{IntegratePZ}) is satisfied. 
 
\end{itemize}
}

\begin{algorithm}  
  \SetAlgoLined
  \caption{Alternative Direction Implicit Scheme}
  \KwIn{initial condition $(P\cZ)^0 = \delta(S-S_0, r-r_0)$, parameters}
  \KwOut{$(P\cZ)^{N_t}$}
  \For{$n = 0 : N_t-1$}
  {
    1. ADI step 1: \\
    \For{$j = 1 : N_r-1$}
    {
      Solve $H_1 (P\cZ)^{n+\frac{1}{2}} = f_1((P\cZ)^n)$ to get $(P\cZ)^{n+\frac{1}{2}}$ \;
    }    
    2. ADI step 2: \\
    \For{$i = 1 : N_S-1$}
    {
      Solve $H_2 (P\cZ)^{n+1} = f_2((P\cZ)^{n+\frac{1}{2}})$ to get $(P\cZ)^{n+1}$ \;
    } 
  }
  \label{algo:PDE_ADI}
\end{algorithm}

\subsection{Efficient implementation of the corrective terms} \label{EfficientImplementation}

Let's note by $K_1 < K_2< ...., < K_N$ the strike grid where we want to compute the corrective terms 
$E[Z(T) (r(T) - f(0, T))1_{S(T) > K}]$ at each maturity date $T$. We observe the corrective terms for two consecutive strikes are highly correlated. Indeed for $K_i < K_{i+1}$, we have

\begin{equation}\label{RelationCorrectiveTerms}
Adj(K_i) =  Adj(K_{i+1}) + E[Z(T) (r(T) - f(0, T))1_{K_{i+1} \geq S(T) > K_i}],
\end{equation} 
with $Adj(K) = E[Z(T) (r(T) - f(0, T))1_{S(T) > K}]$.\\

So an efficient algorithm consists to calculate $Adj(K_N)$ first and then to use the relation 
(\ref{RelationCorrectiveTerms}) to compute sequentially the other terms. With this method, computing all corrective terms for a given maturity consists essentially to perform one numerical integration in the whole domain of discretization for $S$ and $r$, which allows to speed up significantly the computation time. 
In our numerical experiments, the integrals for calibration and pricing are estimated numerically using the PDE solutions and grid points. 

\begin{algorithm}  
  \SetAlgoLined
  \caption{Calibration Algorithm}
  \KwIn{Necessary parameters for using the ADI method.}
  \KwOut{$\sigma(T_i,K_j)$, $i=1,...,N_T$, $j=1,...,N_K$.}
  \For{$i = 1 : N_T$}
  {
	Solve equation (\ref{fwdeqpz}) to get $(P\cZ)$   \; 
    \For{$j = 1 : N_K$}
    {
      1. Do numerical integration using equation (\ref{RelationCorrectiveTerms}) to obtain the extra term;\\
      2. Calculate the sensitivities to get $\sigma_{Dup}$\;
      3. Evaluate $\sigma(T_i,K_j)$ \;
    }
  }
  \label{algo:Calibration}
\end{algorithm}

\section{Numerical experiments} \label{sec:Numerical experiments}
\subsection{ Black-Scholes Hull-White hybrid model}

Let's consider the Black-Scholes economy with Hull-White stochastic interest rates model:

\begin{equation} \label{BSVasicekSDEs}
\left\lbrace
\begin{array}{l}
	\frac{dS(t)}{S_t} = r(t)dt + \sigma_1 dW^1(t) , \hspace{0.3cm} S(0) = S_0 ,\\
	dr(t) = a(\theta(t)-r(t))dt +  \sigma_2 (\rho dW^1(t) + \sqrt{1-\rho^2} dW^2(t)), \hspace{0.3cm} r(0) = r_0. \\
	\end{array}
\right.
\end{equation}

$a$ represents the speed of reversion, $\theta$ the long term mean level.
From \cite{BrigoMercurio06}, the zero coupon $ZC(0,T)$ price and the forward rates are given respectively by

\begin{equation}\label{ZCVasicek}
ZC(0,T) = A(0, T)e^{-B(0, T)r(0)},
\end{equation}

\begin{equation}\label{FwdVasicek}
f(0,T) = -\frac{\sigma_2^2}{2a^2} + \theta - (\theta- \frac{\sigma_2^2}{a^2} - r0)e^{-aT} - \frac{\sigma_2^2}{2a^2}e^{-2aT},
\end{equation}
with $B(0, T) = \frac{1}{a}(1-e^{-aT})$ and $A(0, T) = e^{ \left[ (\theta - \frac{\sigma_2^2}{2a^2})(B(0, T)-T) 
- \frac{\sigma_2^2}{4a} B(0, T)^2\right] }$. \\

\remark{ We can exactly fit the term structure of interest rates being observed in the market  by considering $\theta = \theta(t)$ . From \cite{BrigoMercurio06}, the formula is 
 given by

\begin{equation}
\theta(t) = \frac{1}{a}\frac{\partial f(0,t)}{\partial t} + f(0,t) + \frac{1}{2}\left( \frac{\sigma_2}{a} \right)^2(1-e^{-2at}).
\end{equation}

}

Using the martingale method, the European call option price $C(T,K)$ with maturity $T$ and strike $K$ is derived in \cite{Fang12} which we summarize in the next proposition.

{
\proposition{\label{BlackScholesStochasticRates}
\begin{equation}
C(T, K) = S_0N(d_1) - KZC(0,T)N(d_2),
\end{equation}

where $n(t) = \frac{1}{\sqrt{2 \pi}} e^{-\frac{1}{2}t^2}$, 
$N(x) = \int_{-\infty}^{x}n(t)dt$, 
$k = ln(K)$,

$d_1 = \frac{ \log \left( \frac{S_0}{K} \right) - \log(ZC(0,T)) + \frac{1}{2} \int_0^T \hat{\sigma}^2(t)dt}{\sqrt{\int_0^T \hat{\sigma}^2(t)dt}}$, \, \, $d_2 = d_1 - \sqrt{\int_0^T \hat{\sigma}^2(t)dt}$, \\
$\hat{\sigma}(t) = \sqrt{ \sigma_1^2 + 2\rho\sigma_1 \sigma_2 X(t) + \sigma^2_2 X^2(t)} $ and 
$X(t) = -\frac{1}{ZC(0,T)} \frac{\partial ZC(0,T)}{\partial r}$.\\
}
}

From (\ref{ZCVasicek}), we have $X(t) = B(0, t)$ and

\begin{align}
 \int_0^T \hat{\sigma}^2(t)dt& =  \sigma_1^2T + \frac{2 \rho \sigma_1 \sigma_2}{a} \left[T + \frac{e^{-aT}-1}{a} \right]
  + \frac{\sigma_2^2}{a^2} \left[ T - \frac{1}{2a}(3 - 4e^{-aT} + e^{-2aT}) \right].
\end{align}

The following corollary provides analytical formulas for $C_T, \, C_K$ and $C_{KK}$
used in the local volatility calibration expression (\ref{LVStoIR}).

{
\corollary{ \label{BSStochasticIRGreeks} Using the definitions in proposition (\ref{BlackScholesStochasticRates}) and $g(T) = \int_0^T \hat{\sigma}^2(t)dt$, we have

\begin{align}
\frac{\partial C}{\partial T}(T, K)&:= C_T(T,K) = \frac{S_0 n(d_1)}{2} \frac{\hat{\sigma}^2(T)}{\sqrt{g(T)}} + K ZC(0, T)f(0, T) N(d_2), \label{GreekCT} \\
\frac{\partial C}{\partial K}(T, K) &:= C_K(T,K) = - ZC(0, T)f(0, T) N(d_2), \label{GreekCK}\\
\frac{\partial^2 C}{\partial K^2}(T, K) &:= C_{KK}(T,K) = \frac{ZC(0, T) n(d_2)}{K \sqrt{g(T)}}. \label{GreekCKK}
\end{align}

}
}

The proof is given in appendix.\\

This model offers tractability and we can compute the function  $\cZ (t,y,r)$ analytically whose expression is given in the next proposition.

\begin{proposition}\label{ZtinBSVasicekModel}
Let's define $Y(T) := log(S(T))$, $R(T) := \int_0^T r(s)ds$ and assume the matrix $\Sigma_{yr}$, defined in (\ref{Matyr}), invertible. Then we have
\begin{equation} \label{ZTsrInVasicekBSModel}
E[e^{-R(T)} | Y(T), r(T)] = exp \left\{ {-\mu_R - \Sigma_{yr, R}^{t}\Sigma^{-1}_{yr}}. {\small{ \left( \begin{array}{c} Y(T)-\mu_y \\ 
r(T) - \mu_r\\
\end{array}   \right) } } +  \frac{\Sigma_R - \Sigma^{t}_{yr, R} \Sigma^{-1}_{yr} \Sigma_{yr, R}  }{2}\right\},
\end{equation}
with 
\begin{align}
\mu_y & =  log(S(0)) + \mu_R - \frac{1}{2}\sigma_1^2 T,  \\
\mu_r & = r(0)e^{-aT} + \theta(1-e^{-aT}),\\
\mu_R & = r(0) \frac{(1-e^{-aT})}{a} +   \theta T - \frac{\theta}{a}(1-e^{-aT}), \\
\Sigma_y & = \Sigma_R + \sigma_1^2 T + \frac{2 \sigma_1 \sigma_2 \rho}{a} \left[ T - \frac{1}{a}(1-e^{-aT}) \right],\\
\Sigma_r & = \frac{\sigma_2^2 }{2a}(1-e^{-2aT}),\\
\Sigma_R & = \left(\frac{\sigma_2}{a}\right)^2 \left[ T + \frac{1}{2a}(1-e^{-2aT}) - \frac{2}{a}(1-e^{-aT}) \right],\\
\Sigma_{yr} & = \left(
				\begin{array}{cc}
				\Sigma_y & \frac{\rho \sigma_1 \sigma_2 }{a}(1-e^{-aT}) \\
				\frac{\rho \sigma_1 \sigma_2 }{a}(1-e^{-aT}) &  \Sigma_r
			    \end{array}
			  \right),		\label{Matyr}\\
\Sigma_{yr,R} & = \left(
				\begin{array}{c}
				 \frac{\rho \sigma_1 \sigma_2 }{a} [T - \frac{1}{a}(1-e^{-aT})]\\
				 \left(\frac{\sigma_2}{a}\right)^2( \frac{1}{2} - e^{-aT} + \frac{1}{2}e^{-2aT} )
			    \end{array}
			  \right).		  
\end{align}

\end{proposition}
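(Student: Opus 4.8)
The plan is to recognize that in the Black--Scholes Hull--White model the triple $(Y(T), r(T), R(T))$, where $Y(T) = \log S(T)$ and $R(T) = \int_0^T r(s)\,ds$, is jointly Gaussian, and then to compute the conditional Laplace-type expectation $E[e^{-R(T)} \mid Y(T), r(T)]$ by the standard Gaussian conditioning formula. First I would derive the SDE satisfied by $Y(t) = \log S(t)$ via It\^o's formula, giving $dY(t) = (r(t) - \tfrac12\sigma_1^2)\,dt + \sigma_1\,dW^1(t)$, and note that $r(t)$ solves a linear (Ornstein--Uhlenbeck / Vasicek) SDE whose explicit solution $r(t) = r_0 e^{-at} + \theta(1 - e^{-at}) + \sigma_2 \int_0^t e^{-a(t-u)}(\rho\,dW^1_u + \sqrt{1-\rho^2}\,dW^2_u)$ is a Gaussian process. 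Consequently $R(T) = \int_0^T r(s)\,ds$ and $Y(T) = \log S(0) + R(T) - \tfrac12\sigma_1^2 T + \sigma_1 W^1(T)$ are linear functionals of the driving Brownian motions plus deterministic drift, so $(Y(T), r(T), R(T))$ is Gaussian with a mean vector and covariance matrix that can be written down explicitly.

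Next I would compute the means $\mu_y, \mu_r, \mu_R$ by taking expectations (the stochastic integrals vanish), which immediately yields the stated formulas: $\mu_r$ and $\mu_R$ come from integrating the OU mean, and $\mu_y = \log S(0) + \mu_R - \tfrac12\sigma_1^2 T$ follows from $E[Y(T)] = \log S(0) + E[R(T)] - \tfrac12\sigma_1^2 T$. Then I would compute the second moments $\Sigma_y = \mathrm{Var}(Y(T))$, $\Sigma_r = \mathrm{Var}(r(T))$, $\Sigma_R = \mathrm{Var}(R(T))$, the cross-covariances forming $\Sigma_{yr}$, and the vector $\Sigma_{yr,R}$ of covariances of $R(T)$ with $(Y(T), r(T))$; these reduce to It\^o-isometry integrals of exponentials, e.g. $\mathrm{Var}(r(T)) = \sigma_2^2 \int_0^T e^{-2a(T-u)}\,du = \tfrac{\sigma_2^2}{2a}(1-e^{-2aT})$, and similarly (using Fubini to write $R(T)$'s stochastic-integral representation $R(T) = \mu_R + \tfrac{\sigma_2}{a}\int_0^T (1 - e^{-a(T-u)})(\rho\,dW^1_u + \sqrt{1-\rho^2}\,dW^2_u)$) for the rest. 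Finally, with $(Y(T), r(T))$ jointly Gaussian and $R(T) \mid (Y(T), r(T))$ also Gaussian, the conditional distribution of $R(T)$ given $(Y(T), r(T))$ has mean $\mu_R + \Sigma_{yr,R}^t \Sigma_{yr}^{-1}\big((Y(T)-\mu_y, r(T)-\mu_r)^t\big)$ and variance $\Sigma_R - \Sigma_{yr,R}^t \Sigma_{yr}^{-1}\Sigma_{yr,R}$; applying the Gaussian moment generating function $E[e^{-G}] = e^{-\mathrm{mean}(G) + \tfrac12\mathrm{Var}(G)}$ to this conditional law gives exactly (\ref{ZTsrInVasicekBSModel}).

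The main obstacle is purely computational bookkeeping: correctly evaluating the covariance integrals, in particular $\Sigma_R$ and the off-diagonal terms of $\Sigma_{yr}$ and the entries of $\Sigma_{yr,R}$, where one must carefully track the representation of $R(T)$ as a stochastic integral (interchanging $ds$ and $dW$ via stochastic Fubini) and the correlation structure between $W^1$ (which drives both $Y$ and part of $r$) and $W^2$. One should also verify the nondegeneracy hypothesis that $\Sigma_{yr}$ is invertible so that the conditioning formula applies; generically it holds whenever $|\rho| < 1$ (or $\rho = \pm 1$ with $\sigma_1 \neq 0$), since then $(Y(T), r(T))$ has a genuinely two-dimensional Gaussian law. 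No delicate analysis is needed beyond these explicit Gaussian computations, which is why I would not grind through them here.
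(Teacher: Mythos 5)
Your proposal is correct and follows essentially the same route as the paper: write $Y(T)$, $r(T)$, $R(T)$ explicitly as Gaussian linear functionals of the driving Brownian motions, identify the joint Gaussian law with the stated mean vector and covariance blocks, condition $R(T)$ on $(Y(T), r(T))$ via the standard Gaussian conditioning formula, and conclude with the normal moment generating function. The only cosmetic difference is that the paper simply cites the conditioning formula (chapter 2.3 of \cite{Glasserman03}) and does not spell out the It\^o-isometry computations either, so your level of detail matches the published argument.
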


\begin{proof}

Standard computations give

\begin{align}
Y(T) & = \log(S(0)) + R(T) - \frac{1}{2}\sigma_1^2 T  + \sigma_1 W^1(T), \\
r(T) & = r(0)e^{-aT} + \theta (1-e^{-aT}) + \sigma_2 \int_0^T e^{-a(T-t)} \left( \rho dW^1(t) + \sqrt{1-\rho^2} dW^2(t) \right), \\
R(T) & = \frac{1}{a}(1-e^{-aT}) (r(0)-\theta) + \theta T  + \frac{\sigma_2}{a} \int_0^T (1-e^{-a(T-t)}) \left( \rho dW^1(t) + \sqrt{1-\rho^2} dW^2(t) \right). 
\end{align}

Then $\left(\begin{array}{c} 
				 Y(T)\\
				 r(T)\\
				 R(T)
			    \end{array}
			  \right)$ is a Gaussian vector with mean 
			  $ \mu = \left(\begin{array}{c} 
				 \mu_y\\
				 \mu_r\\
				 \mu_R
			    \end{array}
			  \right)$ and covariance matrix  $\left(
				\begin{array}{cc}
				 \Sigma_{yr}  &  \Sigma_{yr,R}\\
				 \Sigma_{yr,R}^t& \Sigma_R
			    \end{array}
			  \right)$. It is well-known (see e.g chap 2.3 in \cite{Glasserman03}) the conditional distribution of $R(T)$  given by $\left(\begin{array}{c} 
				 Y(T)\\
				 r(T)				
			    \end{array}
			  \right)$ is a normal random variable with mean $\mu_{R|y,r}$ and variance $\sigma^2_{R|y,r}$ given respectively by
			
			\begin{align}
\mu_{R|y,r} & =  \mu_R + \Sigma_{yr,R}^t \Sigma_{yr}^{-1}. \left(\begin{array}{c} 
				 Y(T) - \mu_y\\
				 r(T) - \mu_r				 
			    \end{array}
			  \right),\\
\sigma^2_{R|y,r} & = \Sigma_R - \Sigma_{yr,R}^t \Sigma_{yr}^{-1} \Sigma_{yr,R}.
\end{align} Using the moment generating function for a standard normal random variable $N$, 
 \begin{equation}
  E[e^{uN}] = e^{\frac{u^2}{2}},\: \forall u \in \R,
 \end{equation}
then we obtain the expression in equation (\ref{ZTsrInVasicekBSModel}).  
			  
\end{proof}

\subsubsection{Tests}

For the numerical tests, we consider two sets of parameters:

\begin{center}
\begin{tabular}{|c|c|c|}
\hline
parameters & set 1 & set 2\\
\hline
$S_0$ 	& 1 & 1\\
\hline
$r_0$ 	& $2 \%$ & $2 \%$\\
\hline
$\sigma_1$ 	& $20 \%$ & $20 \%$\\
\hline
$\sigma_2$ 	& $4 \%$ & $4 \%$\\
\hline
$\rho$ 	& $40 \%$ & $-40 \%$\\
\hline
$a$ 	& $0.5$ & $0.5$\\
\hline
$\theta$ 	& $2 \%$ & $2 \%$\\
\hline
$T$ 	& $ 1 $ & $ 2$\\
\hline
\end{tabular}
\end{center}

These model parameters correspond to the order of magnitude usually used in the equity and interest rates derivatives pricing. Also they are in line with the statistical estimates of model parameters performed by Kim in \cite{Y.JK02}.
Here we consider higher volatility $\sigma_2$ for the dynamic of interest rates and correlations values $\rho$ to better measure the impact of stochastic rates process.\\

For each set of model parameters, we solve the PDE (\ref{fwdeqpzexpansion}) in an uniform grid  using the ADI method explained in the previous section. For set 1 and set 2 respectively, the sizes of doing temporal and spatial discretizations are given by 
$ds = 0.0156, \, dr = 0.0026, \, dt = 0.0099$ and $ds = 0.025, \, dr = 0.0037, \,dt = 0.019$.  \\

For set 1, the analytic solutions for $P \cZ$ are shown in Figure \ref{figure:PZFormulaT1Y}, numerical solutions in Figure \ref{figure:PZNumericalT1Y} and their discrepancy in Figure \ref{figure:PZDiscrepancyT1Y}.
For set 2, similar results are illustrated in Figure \ref{figure:PZFormulaT2Y}, Figure \ref{figure:PZNumericalT2Y} and 
Figure \ref{figure:PZDiscrepancyT2Y}. For both sets, we observe the discrepancies between analytic formula and numerical solutions on the PDE grid are reasonable {\it{small}}. For a more accurate assessment, we perform option pricing with the numerical solution  $P \cZ$ at maturity $T$ for each set. We evaluate numerically the European call option prices for various strikes. Numerical and analytical prices solutions are illustrated respectively in Figure 
\ref{figure:CallPricesT1Y} and Figure \ref{figure:CallPricesT2Y} for set 1 and 2. The discrepancies are shown respectively in Figure \ref{figure:PricesDiscrepancyT1Y} and Figure
\ref{figure:PricesDiscrepancyT2Y}.
For both cases, we observe very good pricing accuracy as the differences are within couple of basis points 
($10^{-4}$).\\

Also we quantify the impact of the corrective term in expression (\ref{LVStoIR}) for stochastic interest rates.
To avoid any scaling, we show the analytic formula in the numerator i.e $E[Z(T) (r(T) - f(0, T))1_{S(T) > K}]$ on the grid $(T, K)$. The results are illustrated in Figure \ref{figure:CorrectiveTermNumeratorSet1} and Figure \ref{figure:CorrectiveTermNumeratorSet2} respectively for set 1 and 2.
In the first case with positive correlation parameter $\rho = 0.4$, we observe positive corrective terms where higher values are concentrated around the moneyness $1$. It is expected as discussed in the remark of section  
2 where the corrective term measures the covariance between interest rates and equity spot.
Similarly, for set 2 where the correlation parameter is negative $\rho = -0.4$, we obtain negative values with higher levels concentrated around the moneyness $1$.\\

Finally, we draw the projected discounted factor $\mathcal{Z}(T, S, r)$ in Figure \ref{figure:ZfunctionT1Y} and Figure \ref{figure:ZfunctionT2Y} for set 1 and 2 respectively. In both cases, the forms are similar and their levels vary around $1$.

\begin{figure}[htbp]
  \centering
  \includegraphics[width=0.8\textwidth]{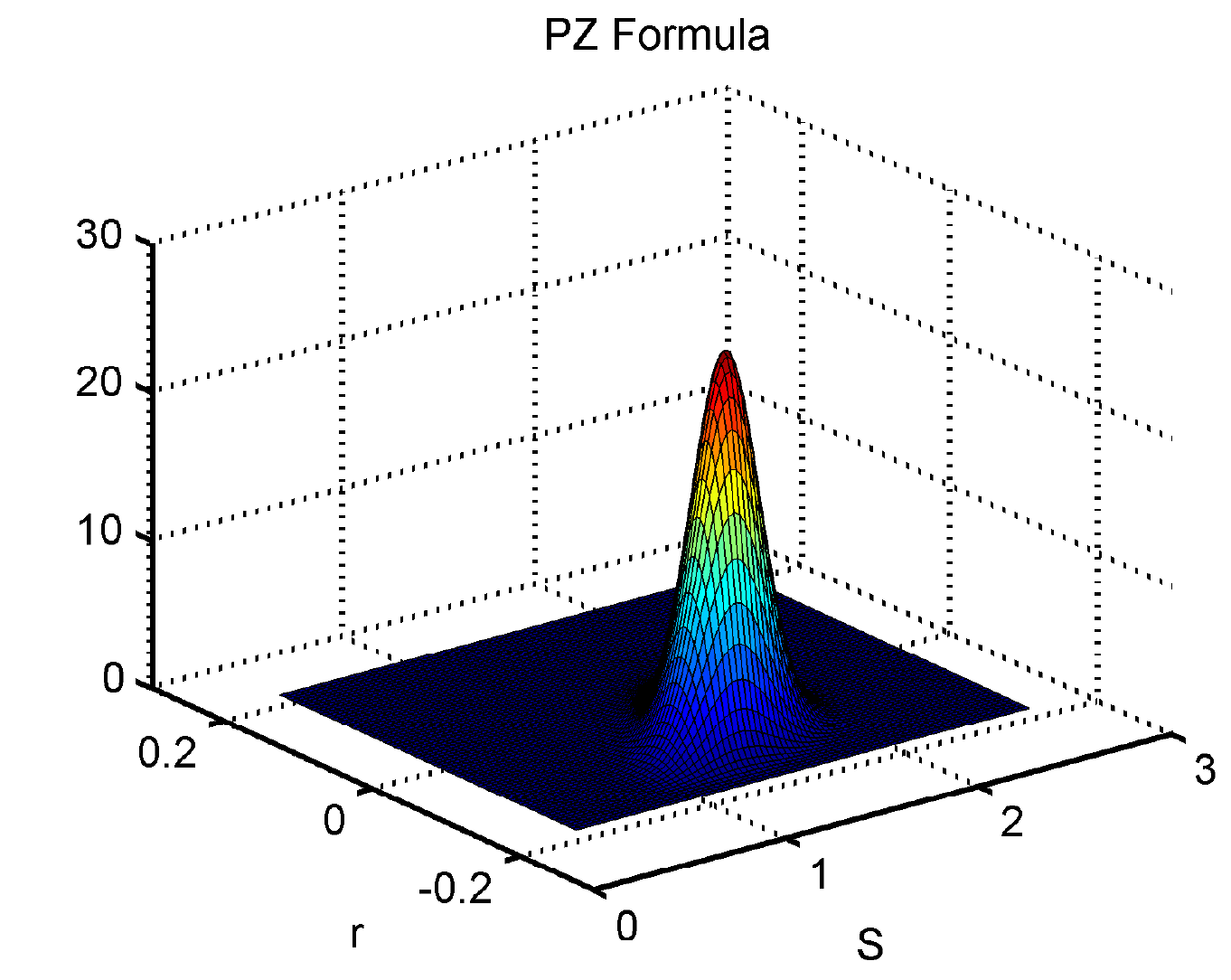}
  \caption{ Parameters $S_0 = 1.0$, 
$r_0 = 0.02$, $\sigma_1 = 0.2$, $\sigma_2 = 0.04$, $\rho = 0.4$, $a = 0.5$,
$\theta = 0.02$, $T = 1.0$.}
  \label{figure:PZFormulaT1Y}
\end{figure}

\begin{figure}[htbp]
  \centering
 \includegraphics[width=0.8\textwidth]{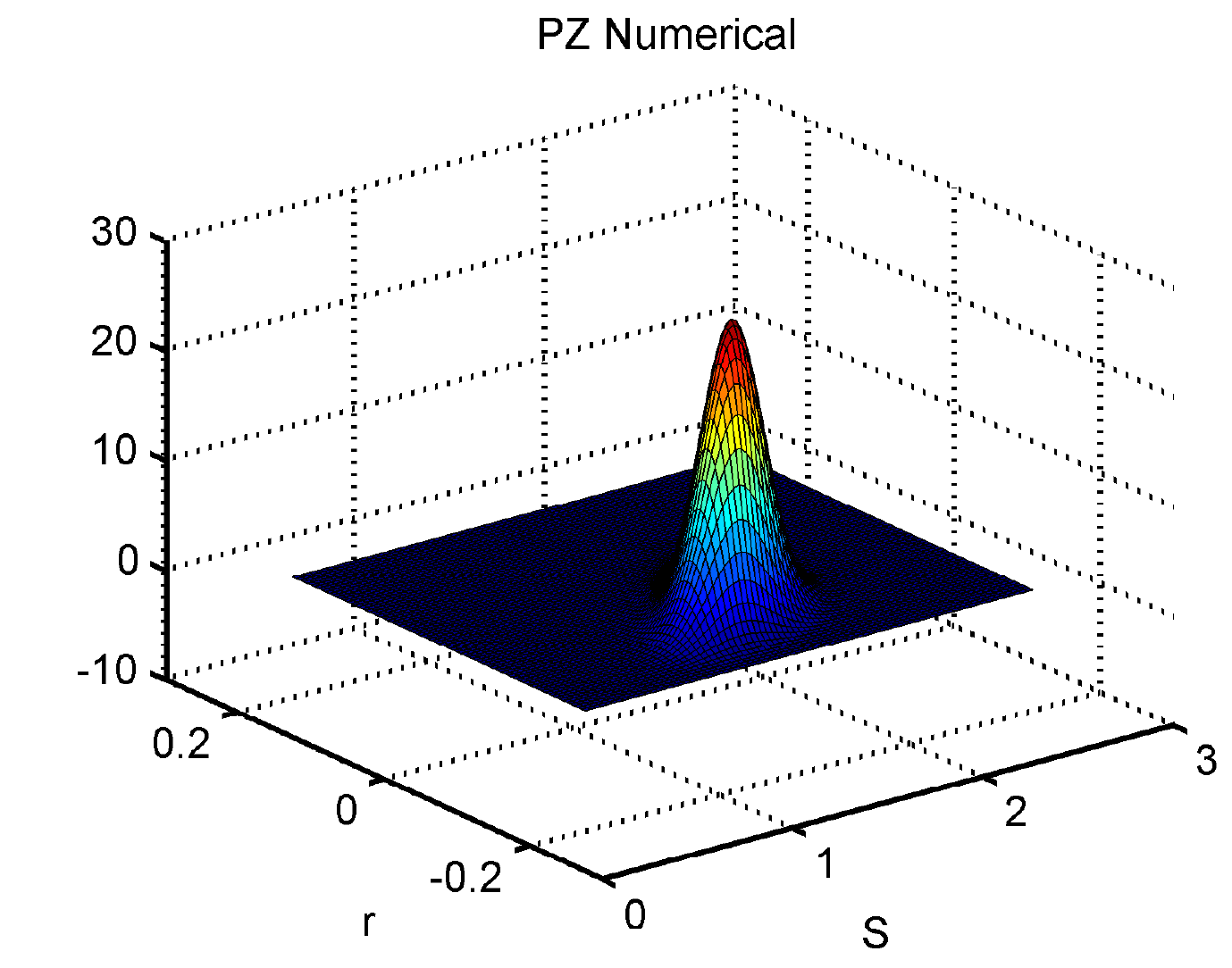}
  \caption{ Parameters $S_0 = 1.0$, 
$r_0 = 0.02$, $\sigma_1 = 0.2$, $\sigma_2 = 0.04$, $\rho = 0.4$, $a = 0.5$,
$\theta = 0.02$, $T = 1.0$.}
  \label{figure:PZNumericalT1Y}
\end{figure}

\begin{figure}[htbp]
  \centering
 \includegraphics[width=0.8\textwidth]{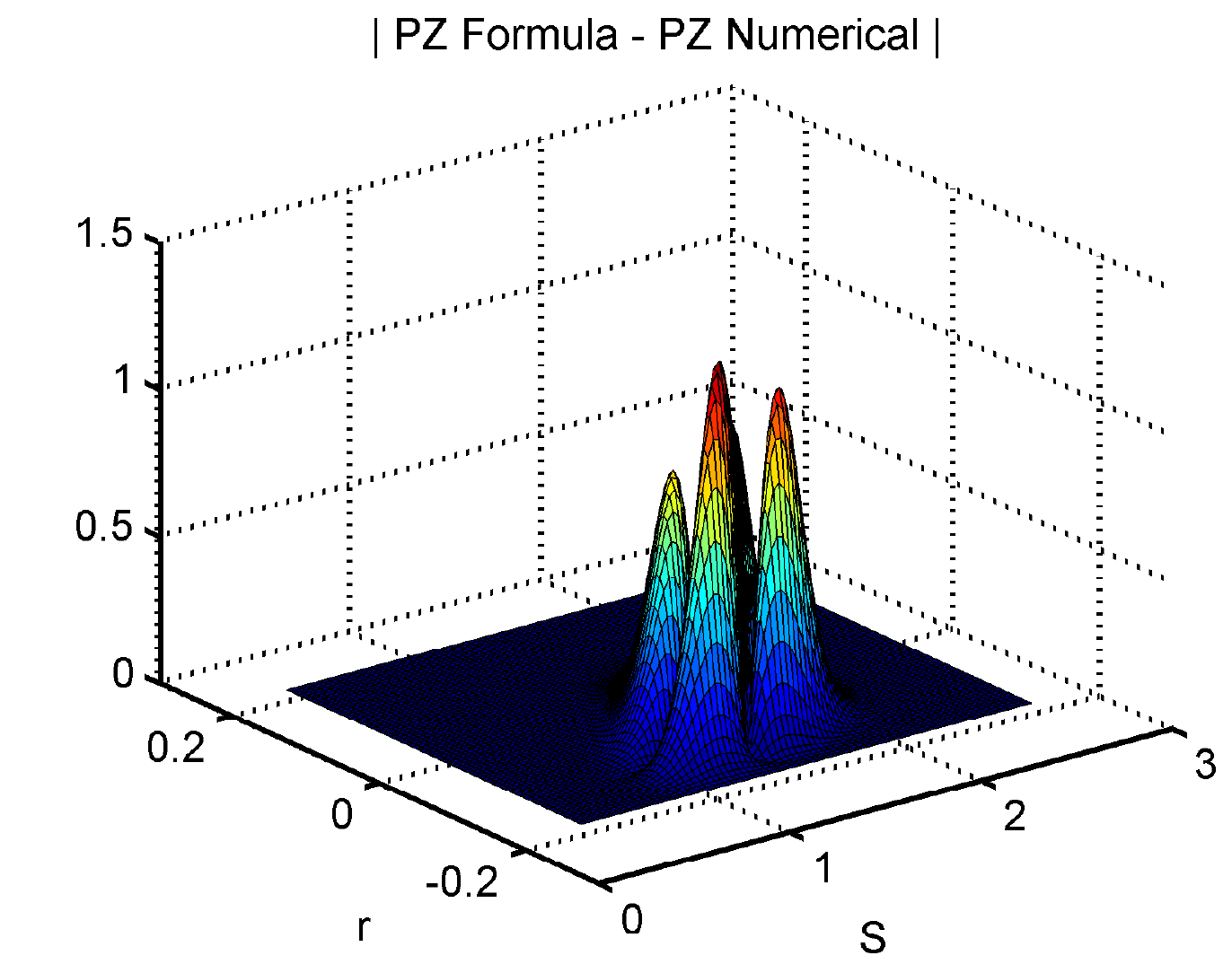}
  \caption{ Parameters $S_0 = 1.0$, 
$r_0 = 0.02$, $\sigma_1 = 0.2$, $\sigma_2 = 0.04$, $\rho = 0.4$, $a = 0.5$,
$\theta = 0.02$, $T = 1.0$.}
  \label{figure:PZDiscrepancyT1Y}
\end{figure}

\begin{figure}[htbp]
  \centering
 \includegraphics[width=0.8\textwidth]{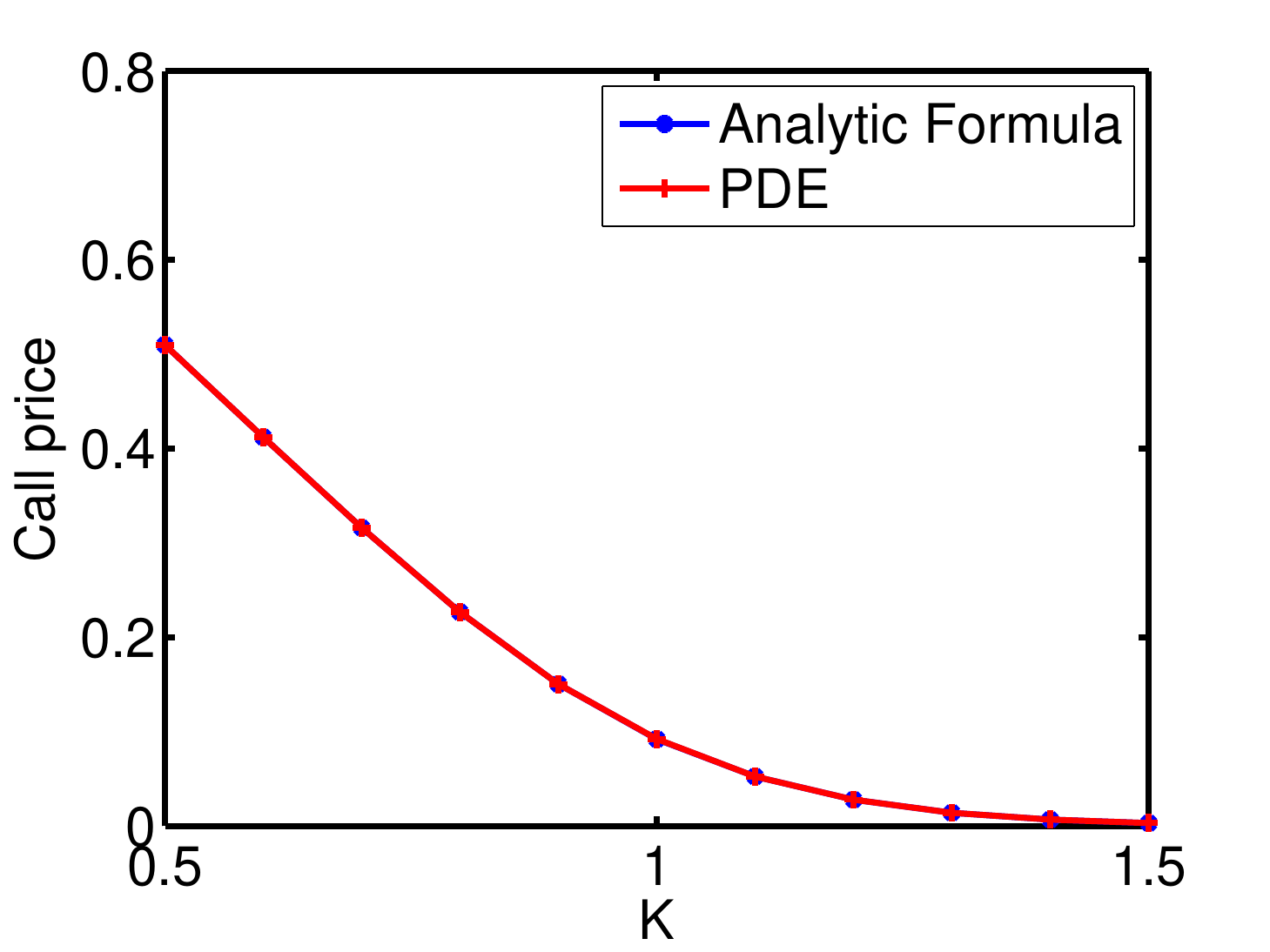}
  \caption{ Parameters $S_0 = 1.0$, 
$r_0 = 0.02$, $\sigma_1 = 0.2$, $\sigma_2 = 0.04$, $\rho = 0.4$, $a = 0.5$,
$\theta = 0.02$, $T = 1.0$.}
  \label{figure:CallPricesT1Y}
\end{figure}

\begin{figure}[htbp]
  \centering
  \includegraphics[width=0.8\textwidth]{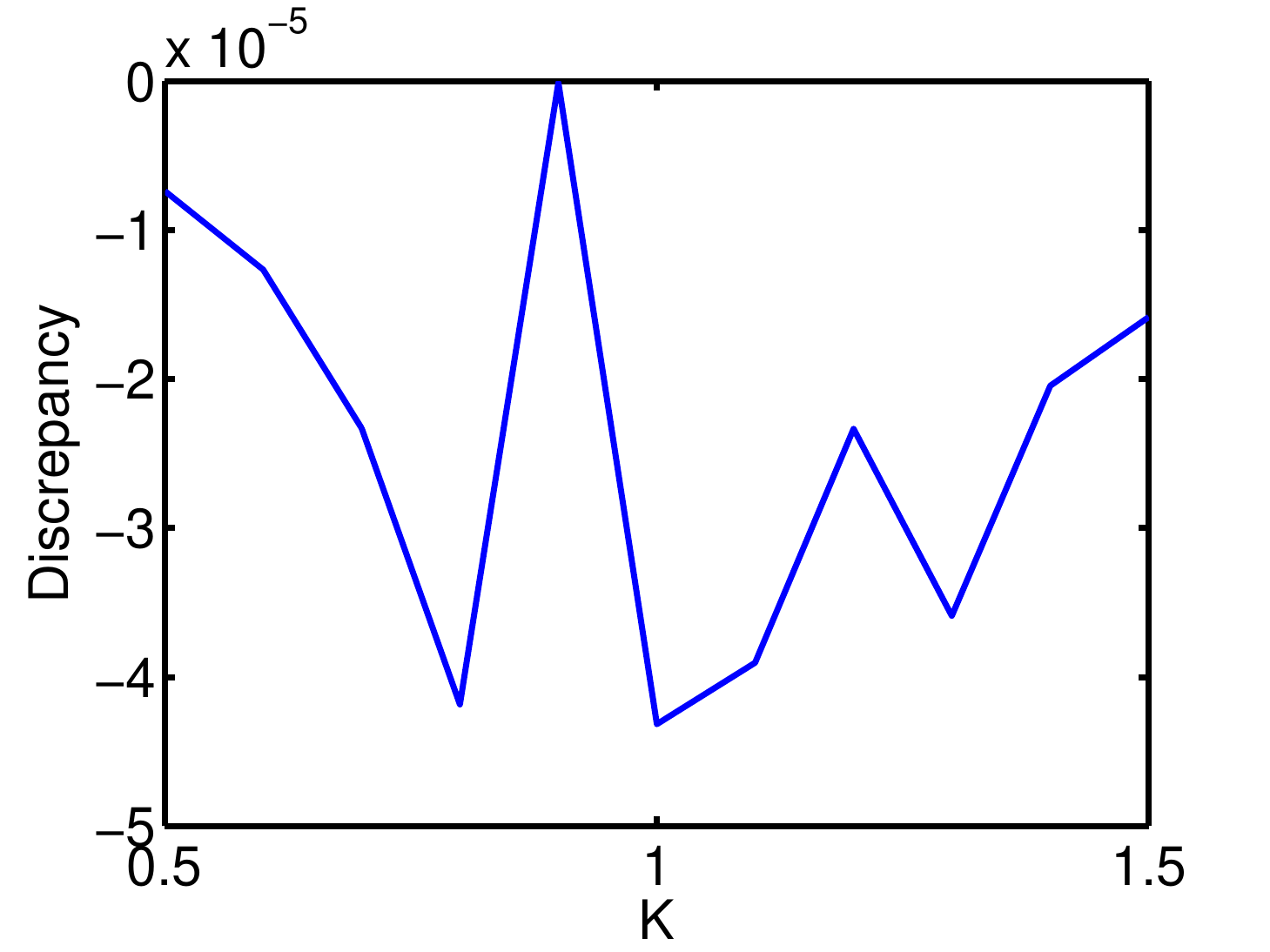}
  \caption{ Parameters $S_0 = 1.0$, 
$r_0 = 0.02$, $\sigma_1 = 0.2$, $\sigma_2 = 0.04$, $\rho = 0.4$, $a = 0.5$,
$\theta = 0.02$, $T = 1.0$.}
  \label{figure:PricesDiscrepancyT1Y}
\end{figure}

\begin{figure}[htbp]
  \centering
    \includegraphics[width=0.8\textwidth]{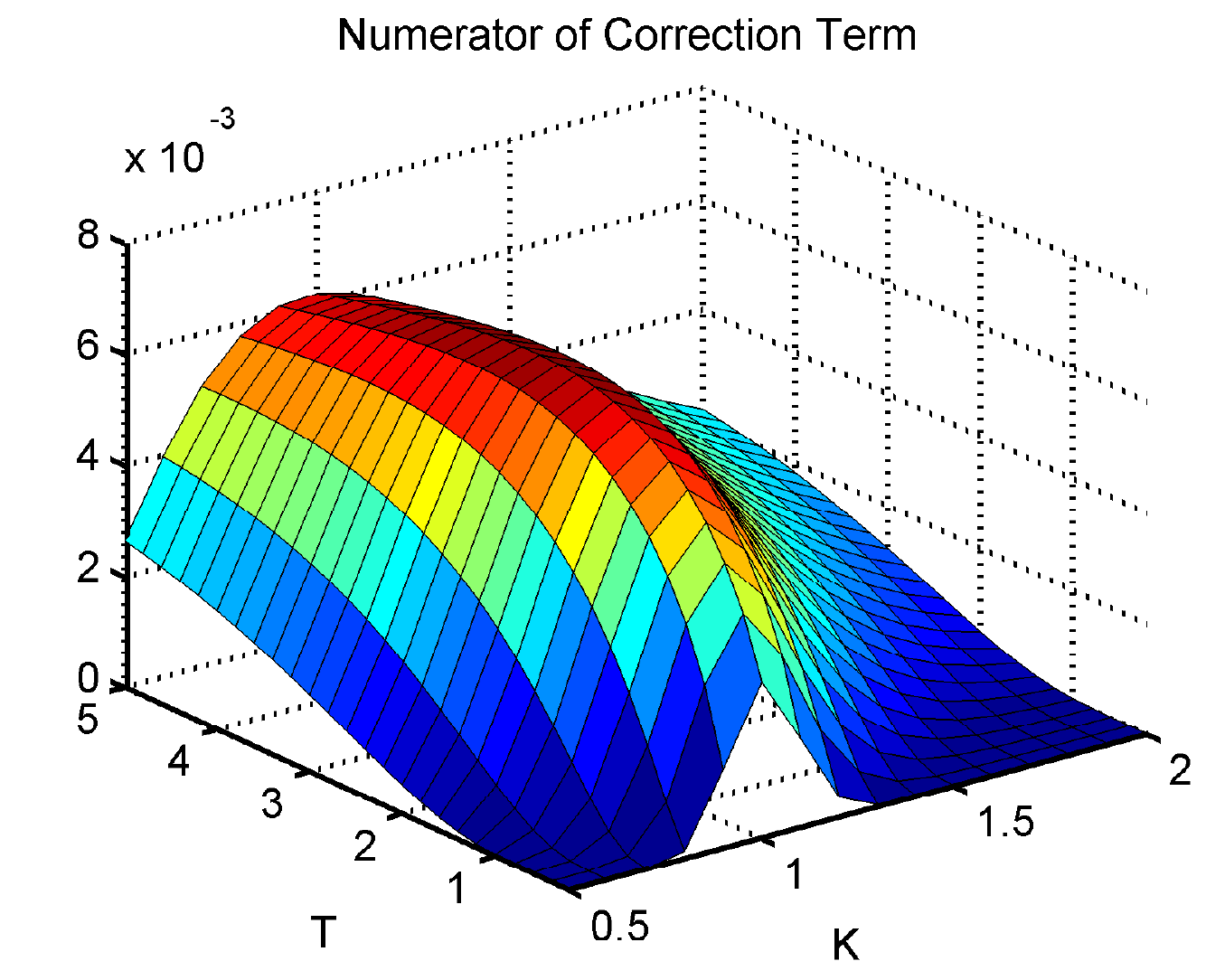}
  \caption{ Parameters $S_0 = 1.0$, 
$r_0 = 0.02$, $\sigma_1 = 0.2$, $\sigma_2 = 0.04$, $\rho = 0.4$, $a = 0.5$,
$\theta = 0.02$.}
  \label{figure:CorrectiveTermNumeratorSet1}
\end{figure}

\begin{figure}[htbp]
  \centering
\includegraphics[width=0.8\textwidth]{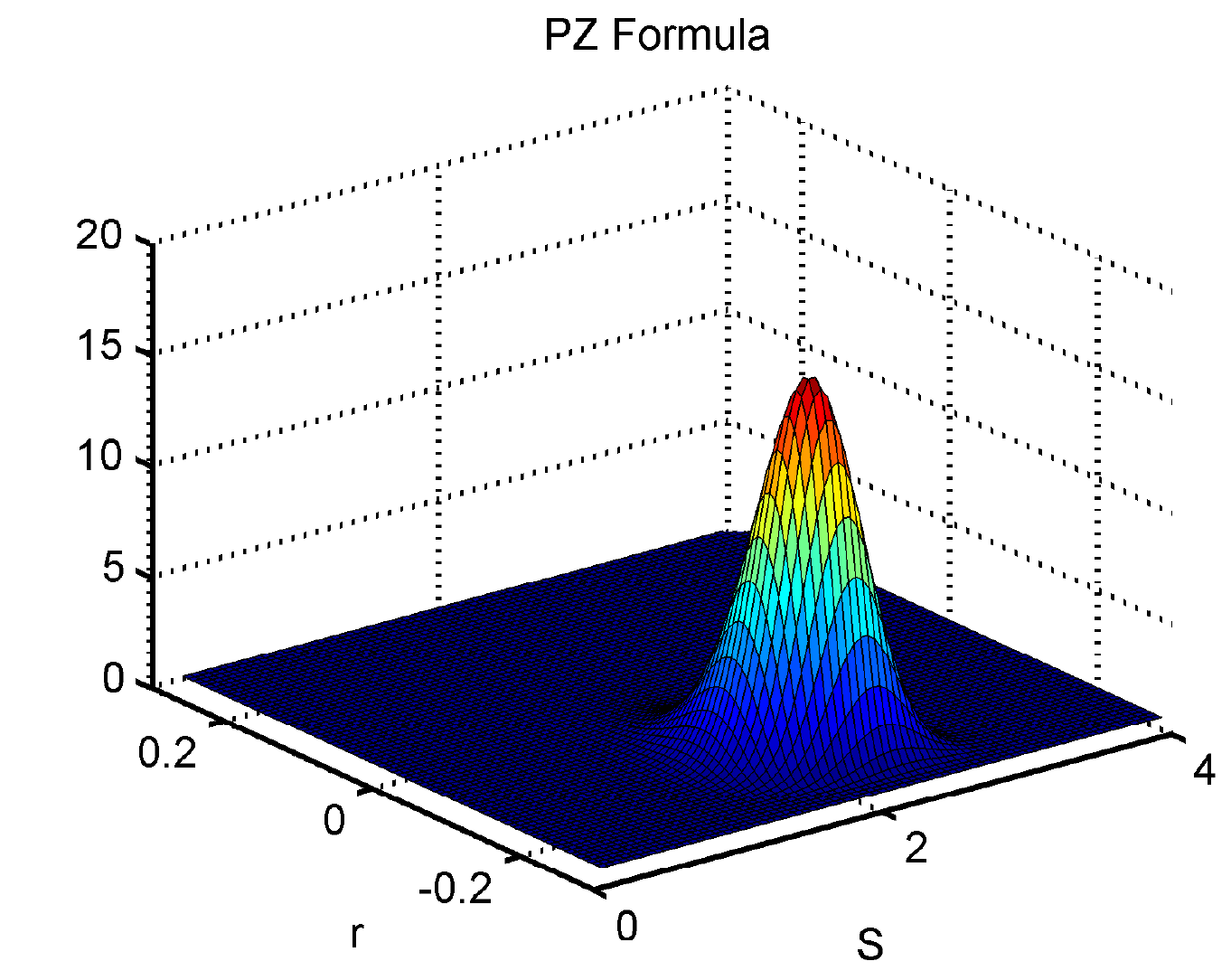}
  \caption{ Parameters $S_0 = 1.0$, 
$r_0 = 0.02$, $\sigma_1 = 0.2$, $\sigma_2 = 0.04$, $\rho = -0.4$, $a = 0.5$,
$\theta = 0.02$, $T = 2.0$.}
  \label{figure:PZFormulaT2Y}
\end{figure}

\begin{figure}[htbp]
  \centering
 \includegraphics[width=0.8\textwidth]{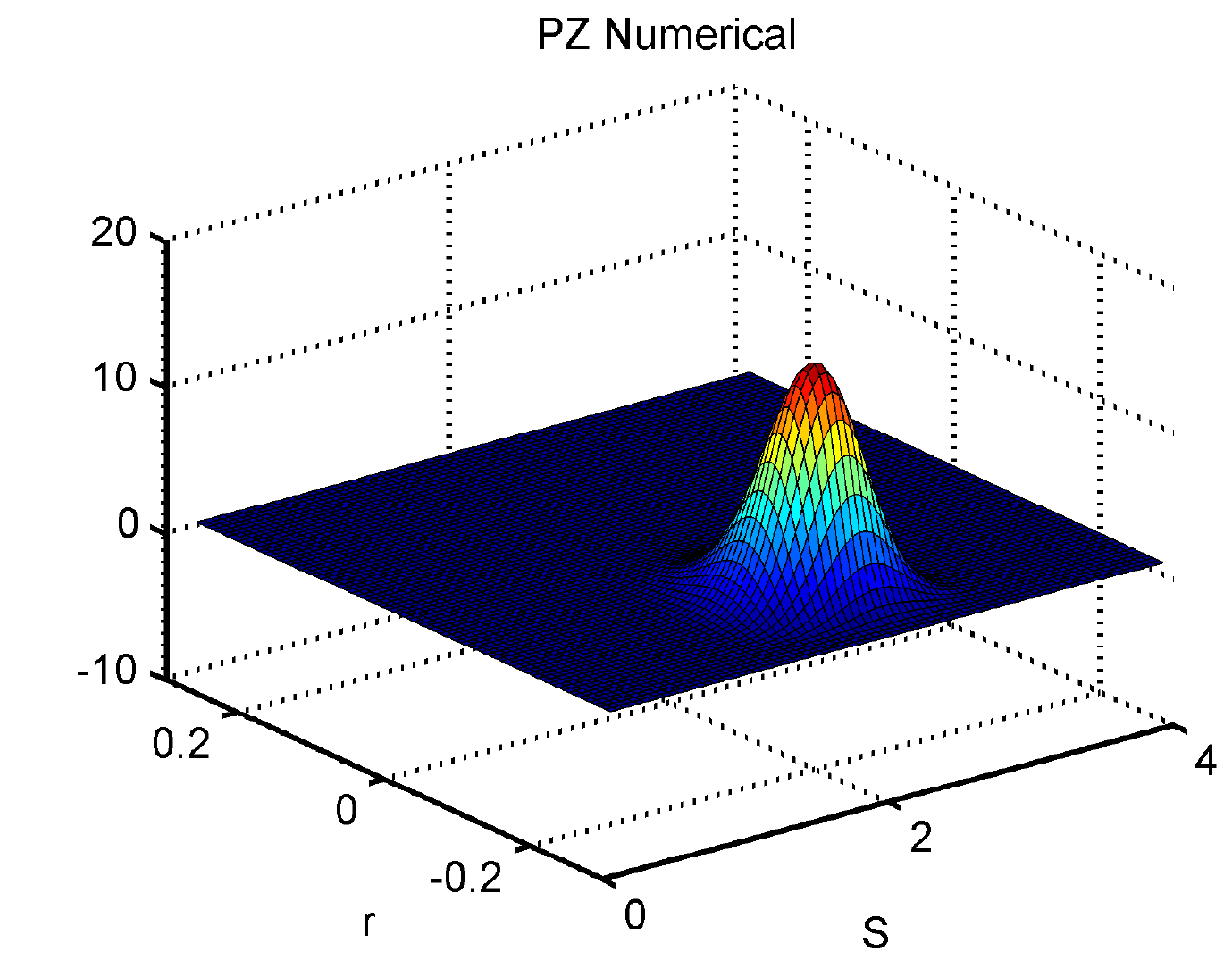}
  \caption{ Parameters $S_0 = 1.0$, 
$r_0 = 0.02$, $\sigma_1 = 0.2$, $\sigma_2 = 0.04$, $\rho = -0.4$, $a = 0.5$,
$\theta = 0.02$, $T = 2.0$.}
  \label{figure:PZNumericalT2Y}
\end{figure}

\begin{figure}[htbp]
  \centering
   \includegraphics[width=0.8\textwidth]{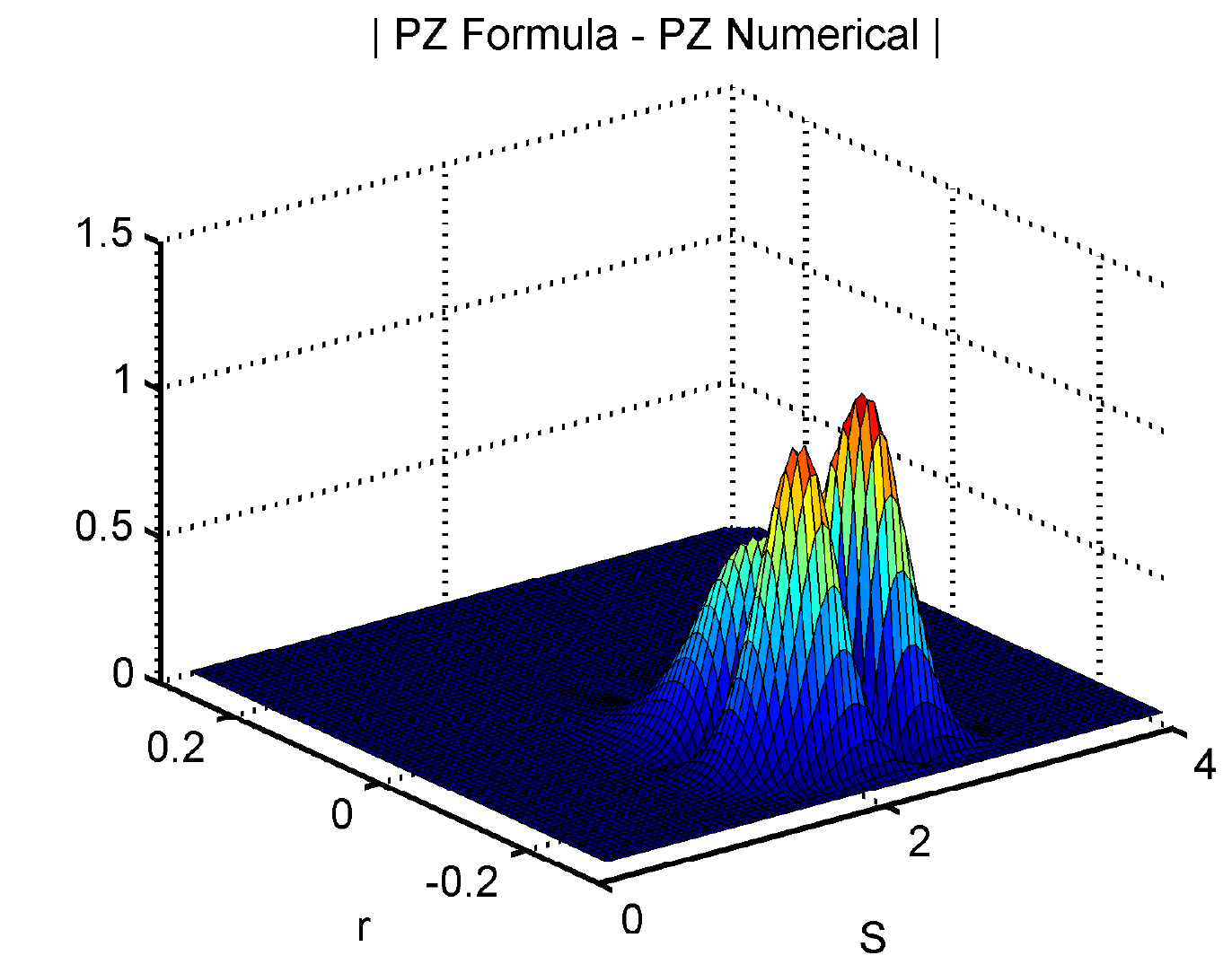}
  \caption{ Parameters $S_0 = 1.0$, 
$r_0 = 0.02$, $\sigma_1 = 0.2$, $\sigma_2 = 0.04$, $\rho = -0.4$, $a = 0.5$,
$\theta = 0.02$, $T = 2.0$.}
  \label{figure:PZDiscrepancyT2Y}
\end{figure}

\begin{figure}[htbp]
  \centering
 \includegraphics[width=0.8\textwidth]{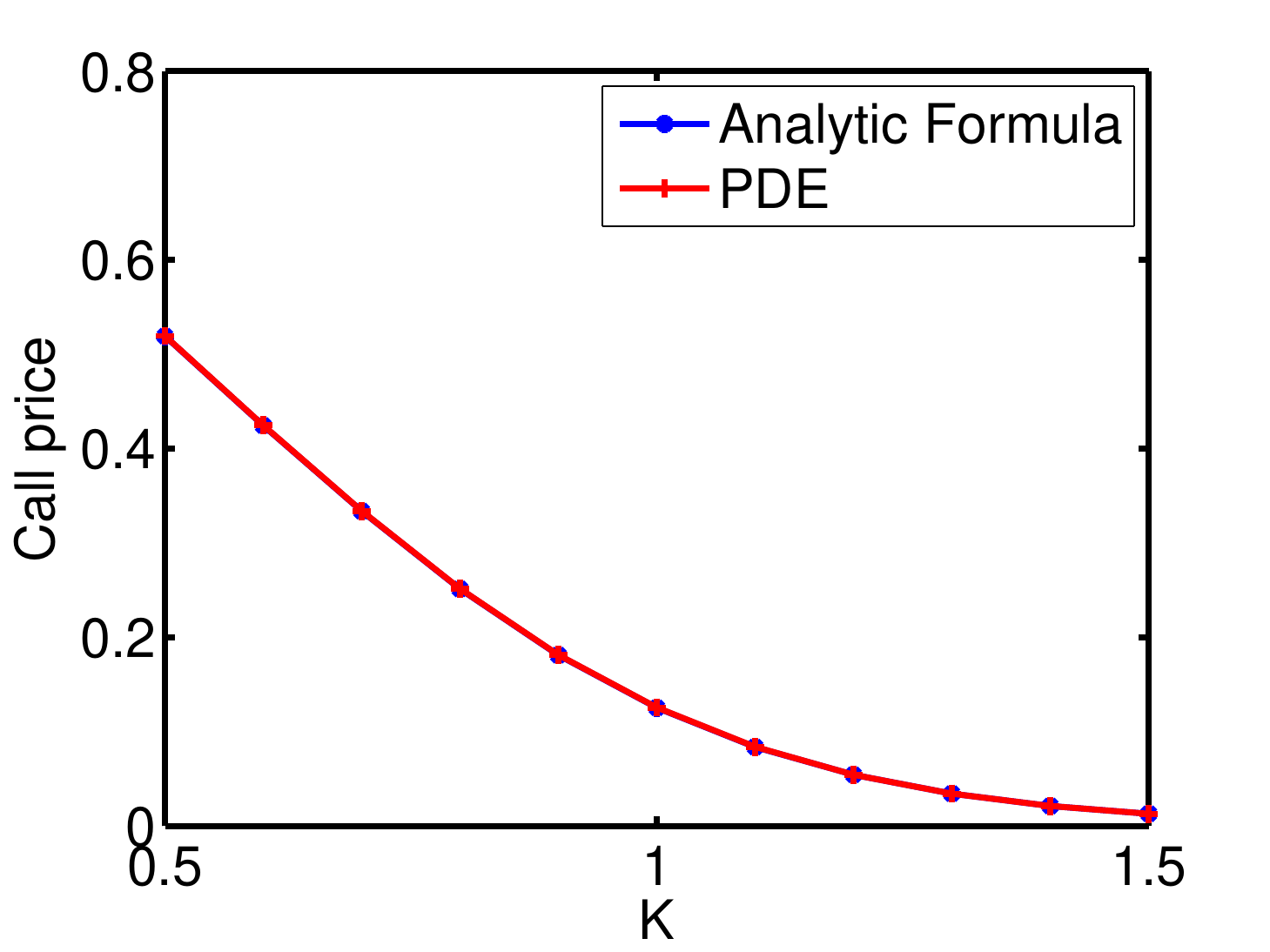}
  \caption{ Parameters $S_0 = 1.0$, 
$r_0 = 0.02$, $\sigma_1 = 0.2$, $\sigma_2 = 0.04$, $\rho = -0.4$, $a = 0.5$,
$\theta = 0.02$, $T = 2.0$.}
  \label{figure:CallPricesT2Y}
\end{figure}

\begin{figure}[htbp]
  \centering
  \includegraphics[width=0.8\textwidth]{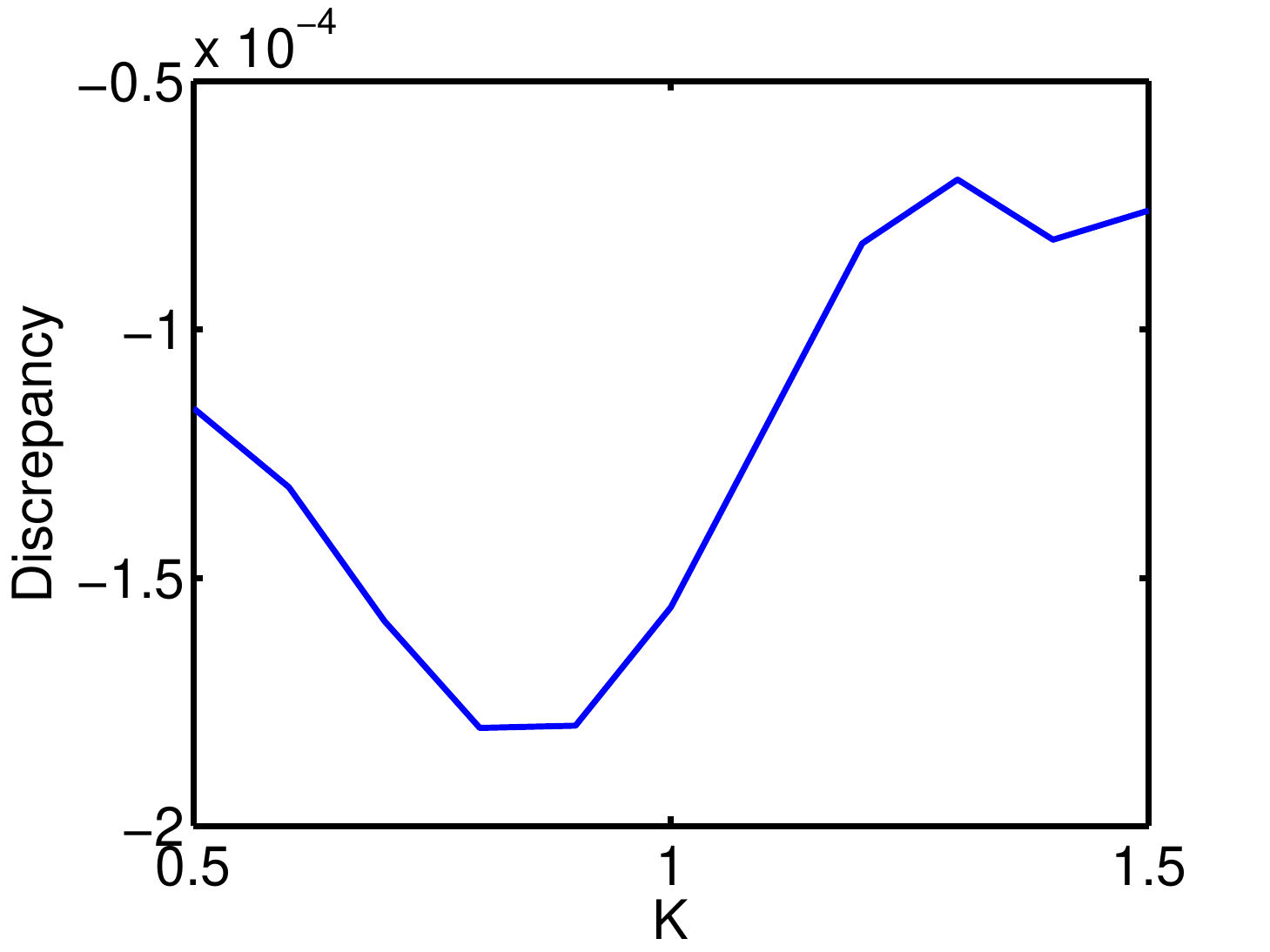}
  \caption{ Parameters $S_0 = 1.0$, 
$r_0 = 0.02$, $\sigma_1 = 0.2$, $\sigma_2 = 0.04$, $\rho = -0.4$, $a = 0.5$,
$\theta = 0.02$, $T = 2.0$.}
  \label{figure:PricesDiscrepancyT2Y}
\end{figure}

\begin{figure}[htbp]
  \centering
  \includegraphics[width=0.8\textwidth]{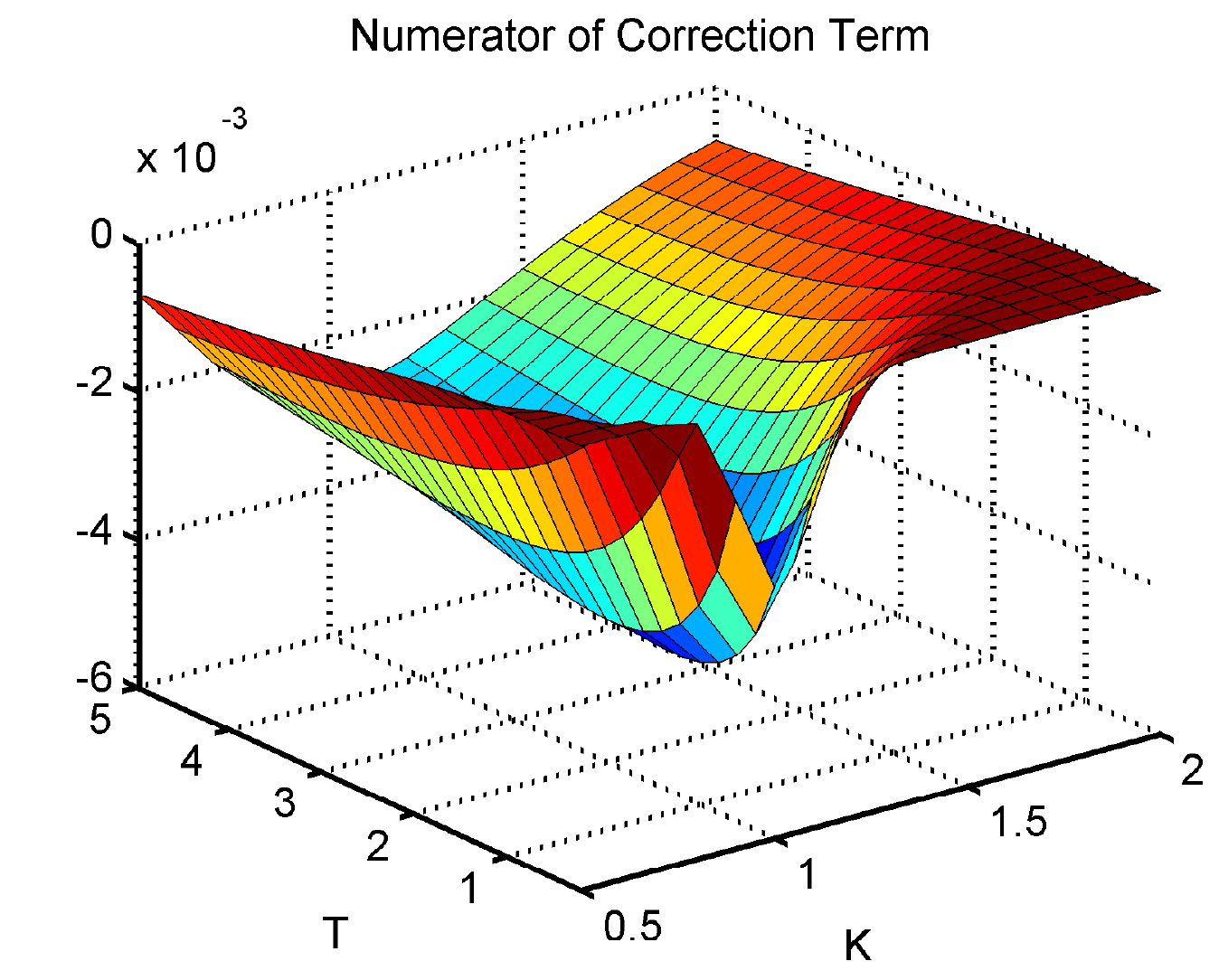}
  \caption{ Parameters $S_0 = 1.0$, 
$r_0 = 0.02$, $\sigma_1 = 0.2$, $\sigma_2 = 0.04$, $\rho = -0.4$, $a = 0.5$,
$\theta = 0.02$.}
  \label{figure:CorrectiveTermNumeratorSet2}
\end{figure}

\begin{figure}[htbp]
  \centering
 \includegraphics[width=0.8\textwidth]{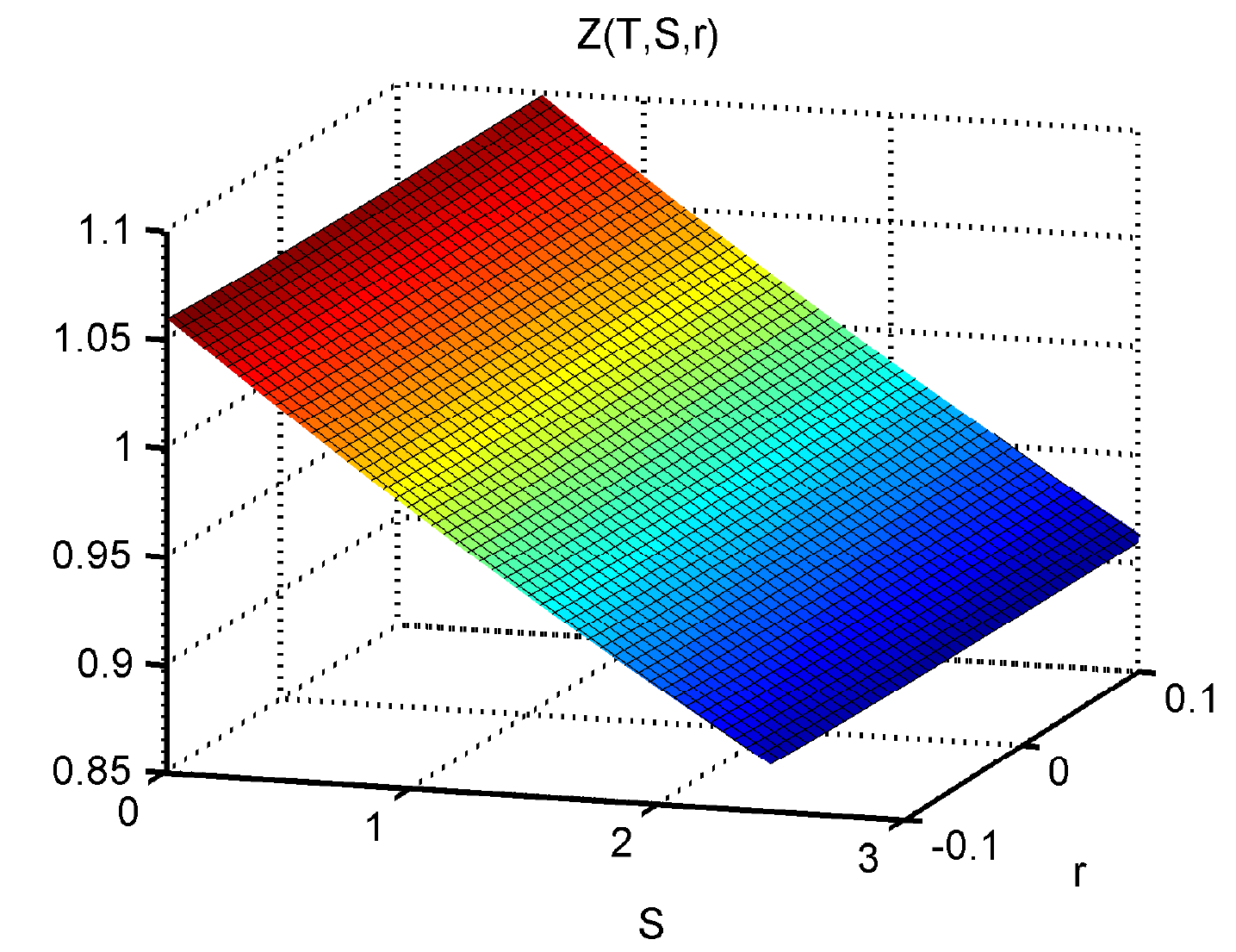}
  \caption{ Parameters $S_0 = 1.0$, 
$r_0 = 0.02$, $\sigma_1 = 0.2$, $\sigma_2 = 0.04$, $\rho = 0.4$, $a = 0.5$,
$\theta = 0.02$, $T = 1.0$.}
  \label{figure:ZfunctionT1Y}
\end{figure}

\begin{figure}[htbp]
  \centering
 \includegraphics[width=0.8\textwidth]{ZfunctionT1Y}
  \caption{ Parameters $S_0 = 1.0$, 
$r_0 = 0.02$, $\sigma_1 = 0.2$, $\sigma_2 = 0.04$, $\rho = -0.4$, $a = 0.5$,
$\theta = 0.02$, $T = 2.0$.}
  \label{figure:ZfunctionT2Y}
\end{figure}

\subsection{Hyperbolic local volatility Hull-White model}

In our second example, we consider a skew model given by

\begin{equation} \label{BSVasicekSDEs}
\left\lbrace
\begin{array}{l}
	\frac{dS(t)}{S_t} = r(t)dt + \sigma_H(S_t) dW^1(t) , \hspace{0.3cm} S(0) = s_0 ,\\
	dr(t) = a(\theta(t)-r(t))dt +  \sigma_2 (\rho dW^1(t) + \sqrt{1-\rho^2} dW^2(t)), \hspace{0.3cm} r(0) = r_0, \\
	\end{array}
\right.
\end{equation}
where
\begin{equation}
\sigma_H(S_t) = \nu \Big\{ \frac{(1-\beta+\beta^2)}{\beta}  +\frac{(\beta-1)}{\beta S_t}  \big(\sqrt{S_t^2+\beta^2(1-S_t)^2}-\beta\big) \Big\},
\end{equation}
with $ \nu > 0$ presents the level of volatility and $ \beta \in [0, 1]$ shows the skew parameter.\\

This model introduced in \cite{Jackel10} behaves
closely to the CEV  model and has been used for
numerical experiments as in \cite{BompisHok14,HokPapLeg17}. It presents the advantage to avoid zero to be
an attainable boundary and then allows to avoid some numerical instabilities
as seen in the CEV model when the underlying $S$ is close to $0$ (see e.g
\cite{And00}). It corresponds to the Black-Scholes model for $ \beta = 1$ and exhibits a skew for the volatility surface when $ \beta \neq 1$.  Figure \ref{figure:HyperbolicLVImpactBeta} illustrates the impact of the parameter $\beta$ on the skew of the volatility surface. We observe that the skew increases significantly with decreasing value of $\beta$. For example with $\nu = 0.2, \, \beta = 0.2$, the difference in volatility between strikes at $50\%$ and at $100\%$ is about $20 \%$.\\

We run two family of tests by considering respectively negative correlation $\rho = -30 \%$ and  positive correlation $\rho = 30 \%$. Other model parameters were chosen in Table \ref{tab:HyperbolicHWModelParam}.
For both tests, we solve the PDE in equation (\ref{fwdeqpz}) for $PZ$ up to maturity $T$ in an uniform grid with $ds = 0.012, \, dr = 0.002, \, dt = 0.0099$. Then we perform the European call pricing with maturity $T$ for various strike by numerical integration using PDE grid points and solution. We also run a Monte Carlo simulation for doing pricing by using Euler discretisation for the SDE (\ref{BSVasicekSDEs}) with $dt = \frac{1}{300}$ and one million of paths (see e.g \cite{Glasserman03} or \cite{KloedenPlaten92}). \\

For set $1$ (respectively for set $2$) the pricing results are shown in Figure \ref{figure:CallPricesT1YNegativeRho} (respectively in Figure \ref{figure:CallPricesT1YPositiveRho}) and their discrepancies in Figure \ref{figure:PricesDiscrepancyT1YNegativeRho} (respectively in Figure \ref{figure:PricesDiscrepancyT1YPositiveRho}).  We obtain very accurate results as the differences of prices given by using PDE and Monte Carlo methods are all within a couple of basis points for all strikes in the range $[0, 2]$.

 \begin{center}
\begin{tabular}{|c|c|}
\hline
$S_0$ 	& 1\\
\hline
$r_0$ 	& $3.75 \%$\\
\hline
$\nu$ 	& $20 \%$ \\
\hline
$\beta$ 	& $0.5$ \\
\hline
$\sigma_2$ 	& $4 \%$\\
\hline
$a$ 	& $0.5$ \\
\hline
$T$ 	& $ 1 $ \\
\hline
\end{tabular}
\captionof{table}{Model parameters for the hyperbolic local volatility Hull-White model.}\label{tab:HyperbolicHWModelParam}
\end{center}

\begin{figure}[htbp]
  \centering
 \includegraphics[width=0.8\textwidth]{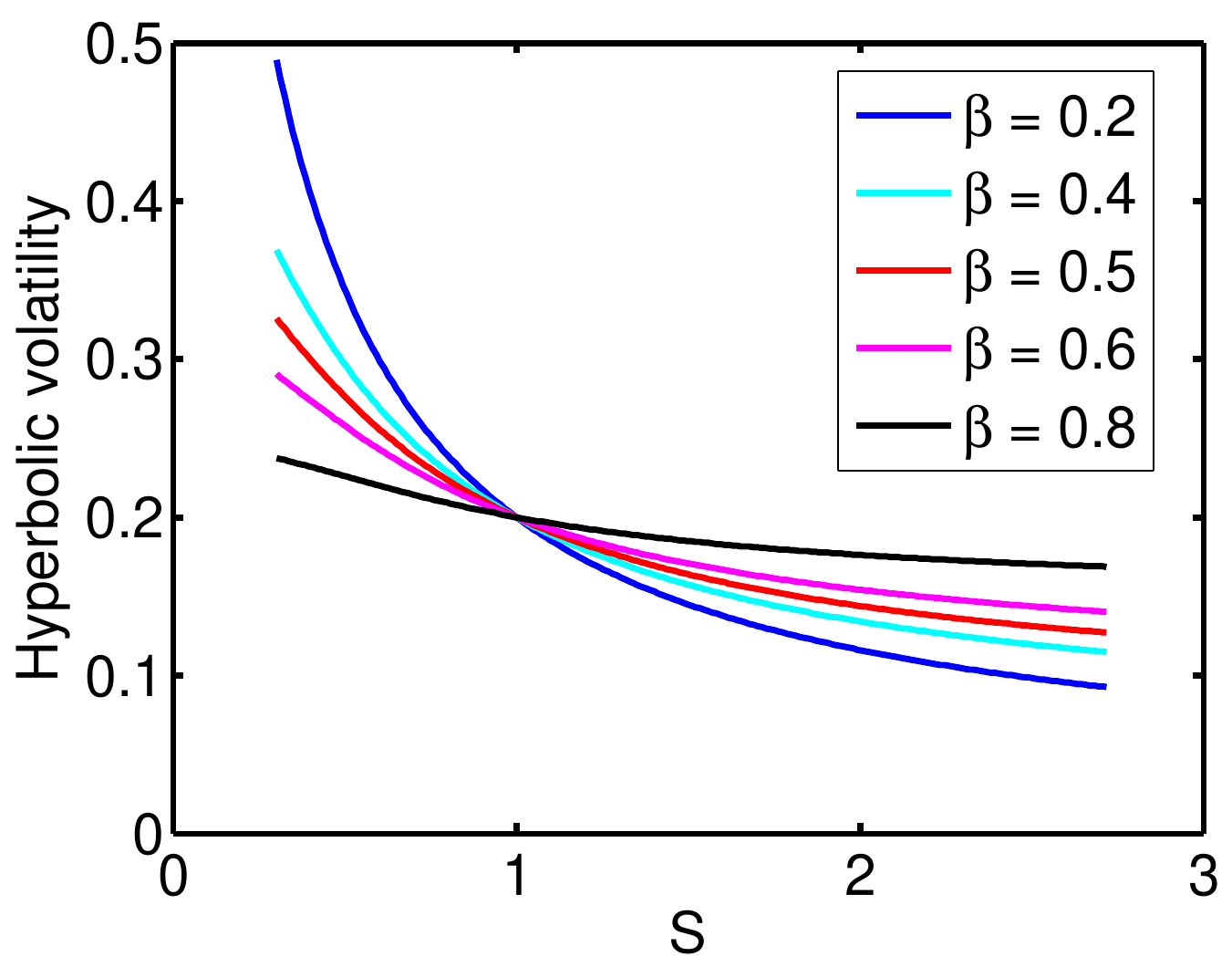}
  \caption{ Impact of the value $\beta$ on the hyperbolic local volatility $\sigma_H$ 
	for fixed volatility level $\nu = 0.2$. }
  \label{figure:HyperbolicLVImpactBeta}
\end{figure}

\begin{figure}[htbp]
  \centering
 \includegraphics[width=0.8\textwidth]{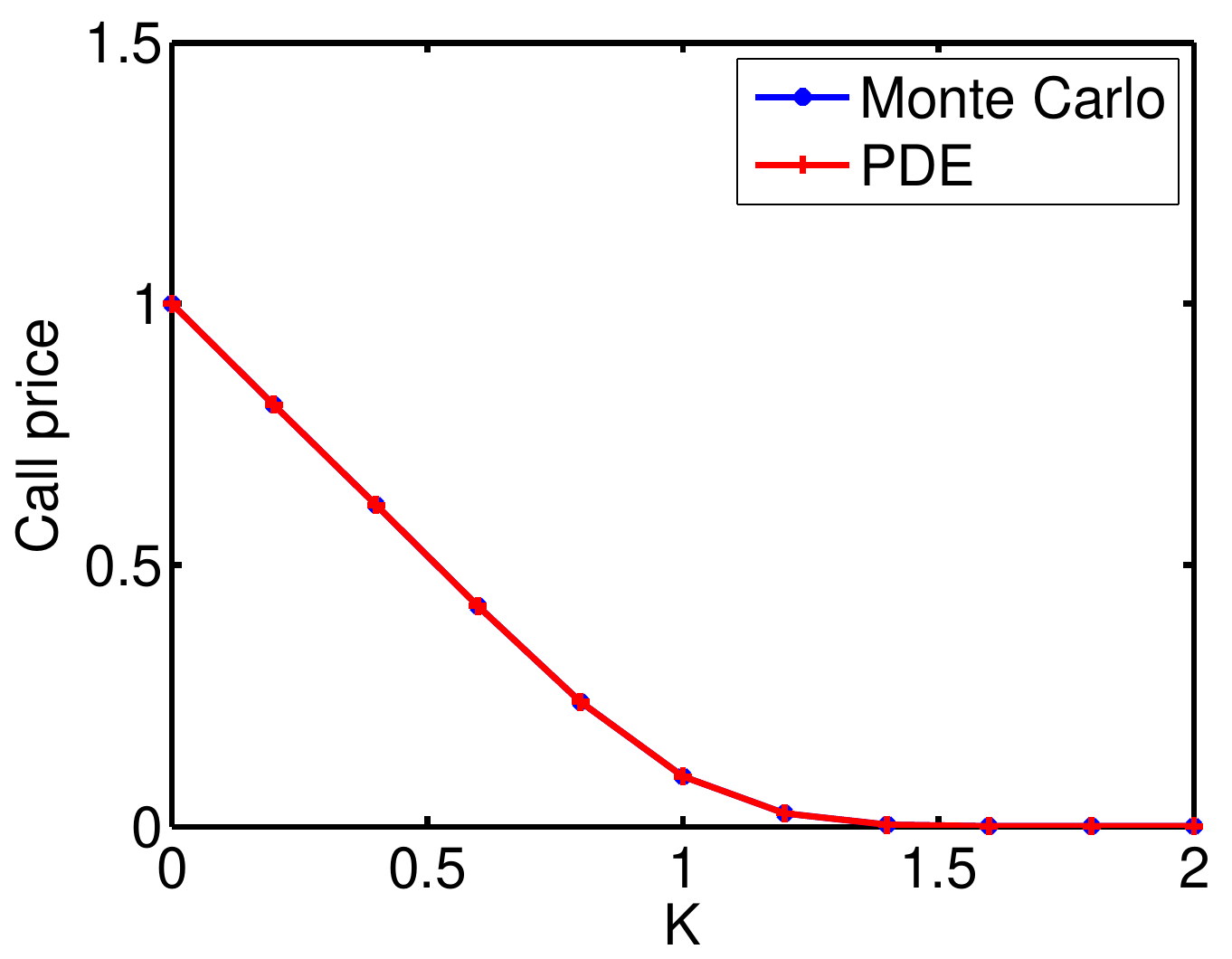}
  \caption{ Parameters $S_0 = 1.0$, 
$r_0 = 3.75 \%$, $\nu = 20\%$, $\beta = 0.5$ , $\sigma_2 = 4 \%$, $\rho = -0.3$, $a = 0.5$, $T = 1.0$.}
  \label{figure:CallPricesT1YNegativeRho}
\end{figure}

\begin{figure}[htbp]
  \centering
  \includegraphics[width=0.8\textwidth]{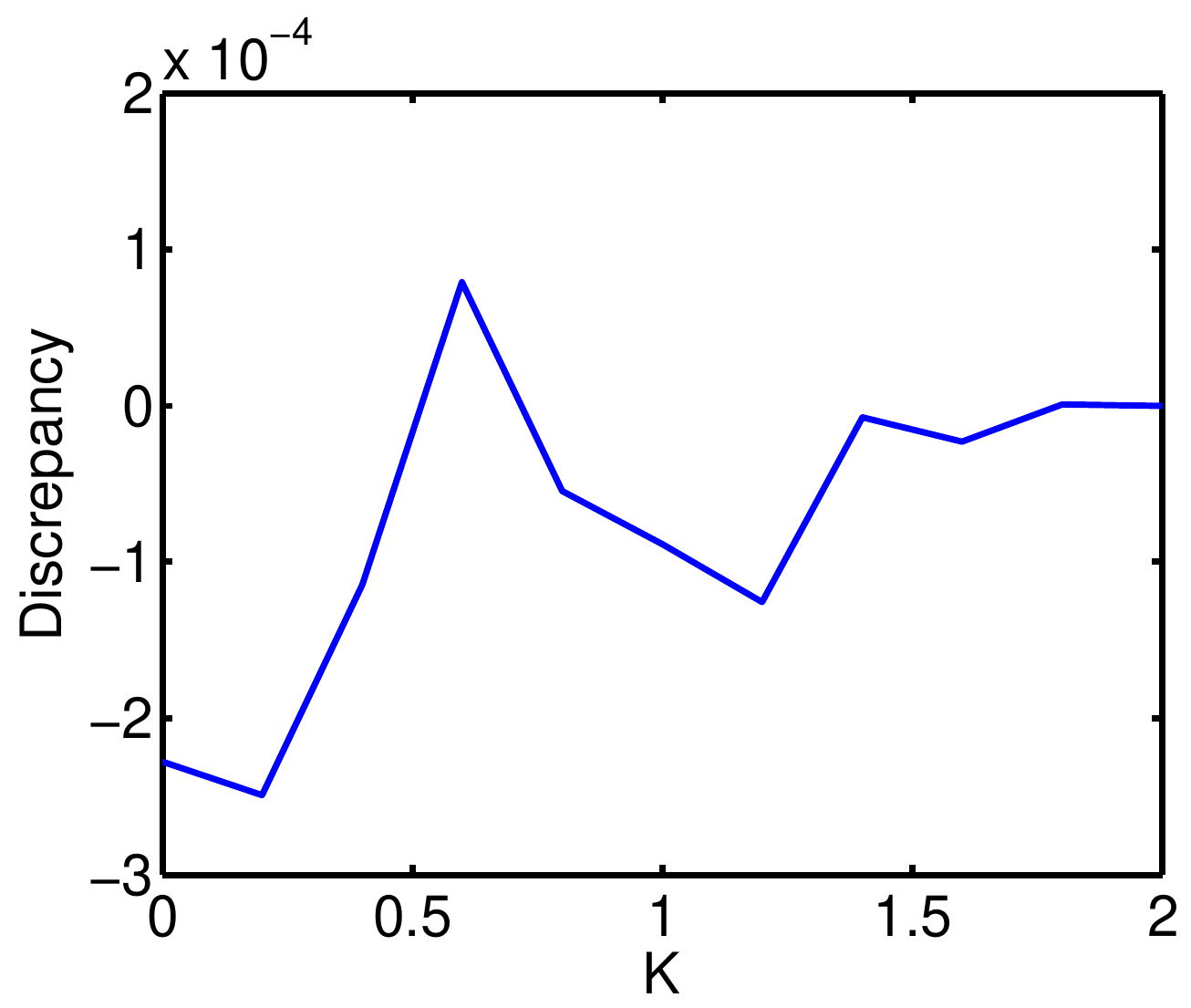}
  \caption{ Parameters $S_0 = 1.0$, 
$r_0 = 3.75 \%$, $\nu = 20\%$, $\beta = 0.5$ , $\sigma_2 = 4 \%$, $\rho = -0.3$, $a = 0.5$, $T = 1.0$.}
  \label{figure:PricesDiscrepancyT1YNegativeRho}
\end{figure}

\begin{figure}[htbp]
  \centering
   \includegraphics[width=0.8\textwidth]{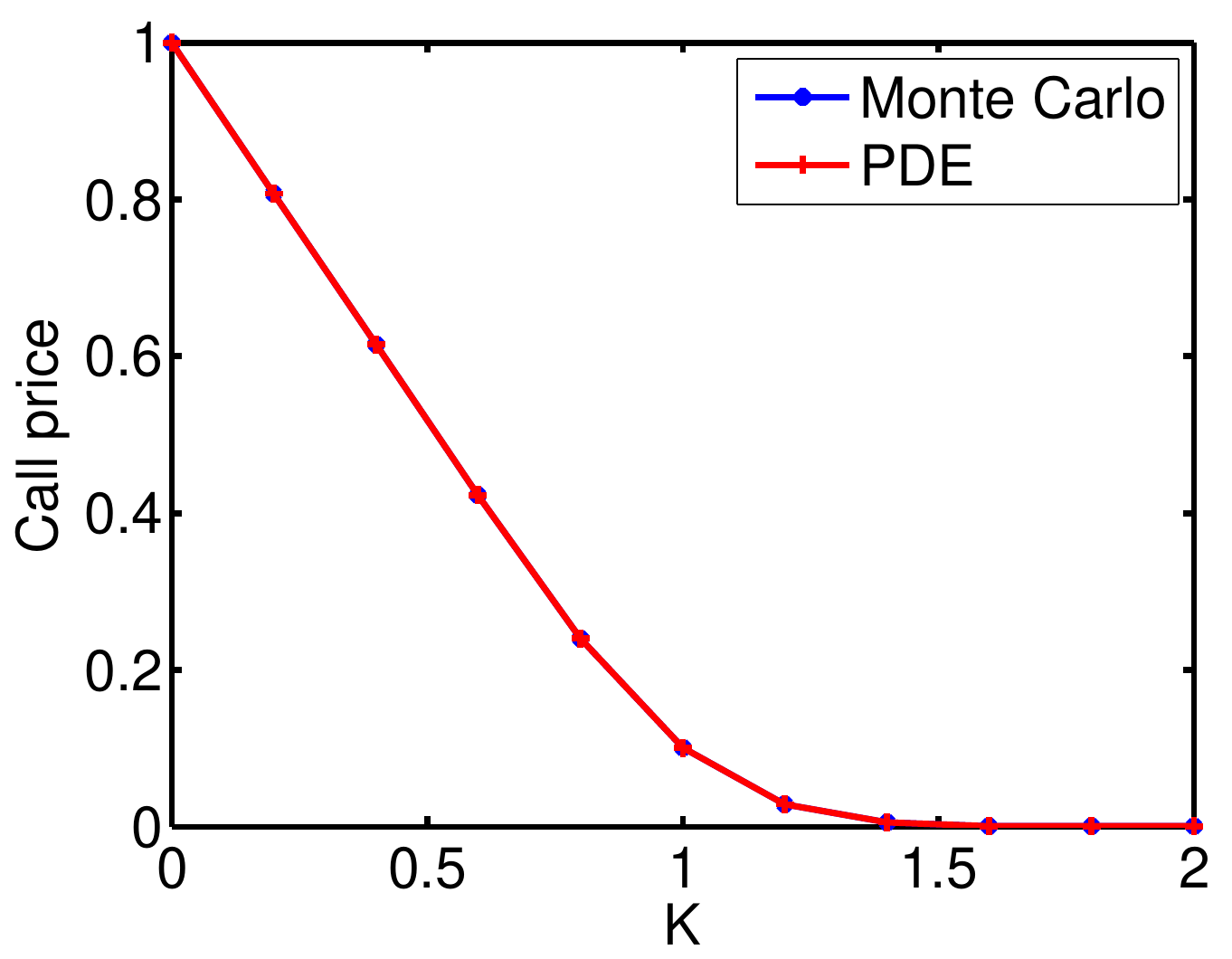}
  \caption{ Parameters $S_0 = 1.0$, 
$r_0 = 3.75 \%$, $\nu = 20\%$, $\beta = 0.5$ , $\sigma_2 = 4 \%$, $\rho = 0.3$, $a = 0.5$, $T = 1.0$.}
  \label{figure:CallPricesT1YPositiveRho}
\end{figure}

\begin{figure}[htbp]
  \centering
\includegraphics[width=0.8\textwidth]{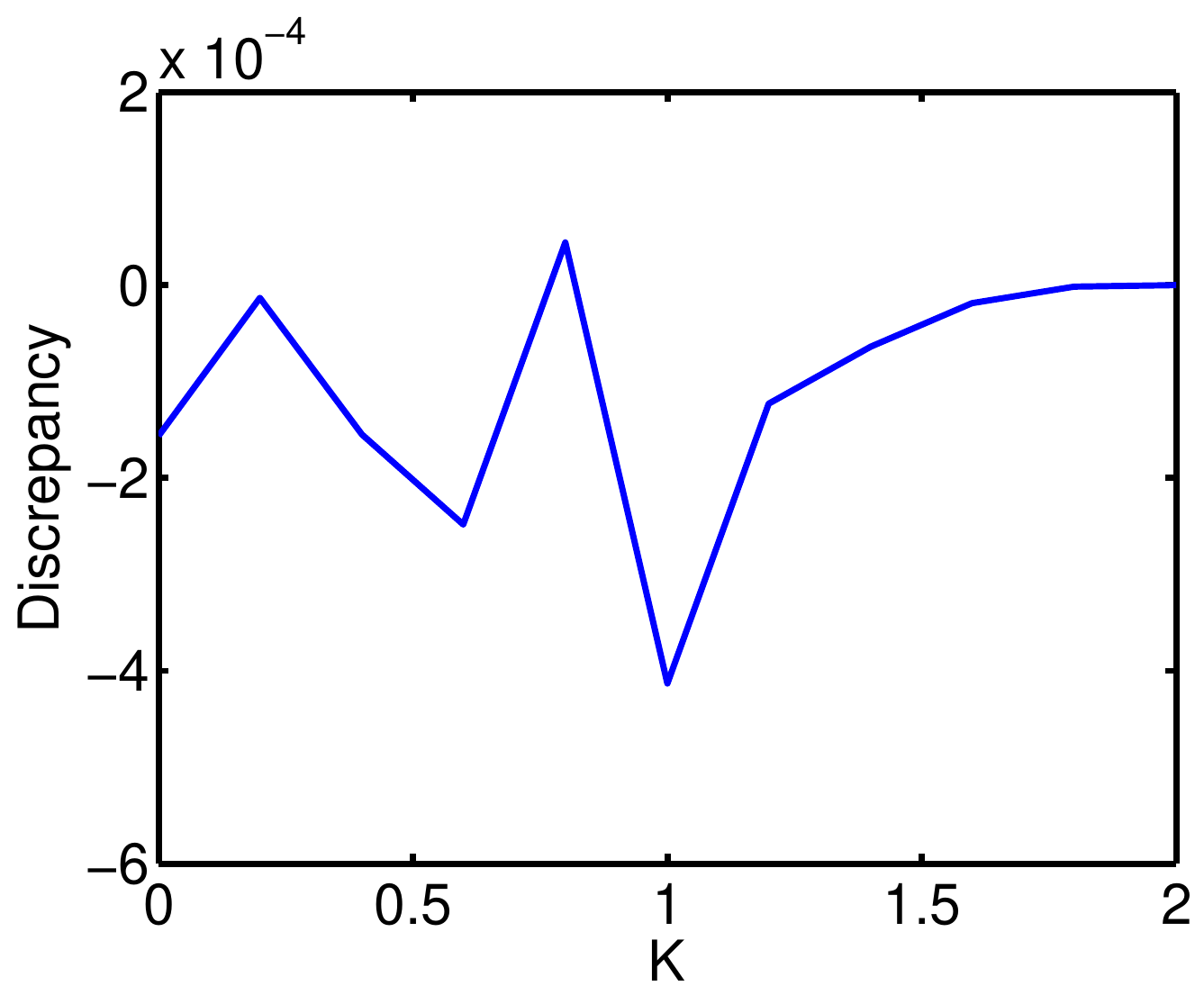}
  \caption{ Parameters $S_0 = 1.0$, 
$r_0 = 3.75 \%$, $\nu = 20\%$, $\beta = 0.5$ , $\sigma_2 = 4 \%$, $\rho = 0.3$, $a = 0.5$, $T = 1.0$.}
  \label{figure:PricesDiscrepancyT1YPositiveRho}
\end{figure}

\section{Conclusions and discussions}

In this paper, we have proposed a new PDE-based technique for calibration on the local volatility model with stochastic interest rate. The main results are the derivation of the forward equation satisfied by $P(t, S, r)Z(t, S, r)$ and the constructed PDE solver based on ADI scheme 
which leads to a more efficient calibration method. Besides, some techniques of accelerating the calibration algorithm are also introduced which make the model practical for real time execution.
The numerical experiments complement our theoretical analysis and show consistent tests results.
Furthermore, the discussed model is actually a general case which can cover most of the well-known extension of
 local volatility models used these days. Therefore this calibration framework is useful for 
 various problems in different asset class markets like equity, exchange or inflation.  \\

We suggest a couple of interesting avenues of research:

\begin{itemize}

\item Here, we have focused our numerical experiments on the resolution of the forward equation (\ref{fwdeqpz}) with two widely used hybrid models in quantitative finance: The Black-Scholes Hull-White 
and the Hyperbolic Local Volatility Hull-White models. It would be interesting to complement the testing of the PDE calibration procedure with live market data.

\item Through the numerical tests, our simple ADI scheme gives good convergence results and shows robustness w.r.t strong skew and high correlations parameters. 
Performing numerical analysis and providing comparisons w.r.t modern ADI schemes are parts of our future research.

\end{itemize}

\section{Appendix}

For the proof of corollary (\ref{BSStochasticIRGreeks}), we provide the following useful lemma

{
\lemma{ Using the definitions in proposition (\ref{BlackScholesStochasticRates})
 and corollary (\ref{BSStochasticIRGreeks}), we have

\begin{equation}\label{usefulRel}
S_0n(d_1) = K ZC(0, T) n(d_2)
\end{equation}
}
}

{
\proof{

\begin{align}
d_2^2 - d_1^2 &= (d_2-d_1)(d_2+d_1)\\
  			  &= -\sqrt{g(T)} (d_2+d_1)\\
  			  &= -\sqrt{g(T)} \left( 2d_1 - \sqrt{g(T)} \right)\\
  			  &= -2 \left( \log \left( \frac{S_0}{K} \right) - \log ZC(0,T) \right)\\
\log \left( \frac{n(d_1)}{n(d_2)} \right) &= \log \left( \frac{K\log ZC(0,T)}{S_0} \right)  &  
\end{align}

From the last expression, we deduce directly (\ref{usefulRel}). \\

}
}

For expression (\ref{GreekCT}), we write 

\begin{align}
C_T(T, K) &= S_0 n(d_1) d_{1, T} - K \left[ ZC_T(0,T) N(d_2) + ZC(0,T) n(d_2) d_{2,T}  \right]\\
		  &= S_0 n(d_1)( d_{1, T}-d_{2, T} ) - K ZC_T(0,T) N(d_2) \\
		  &= \frac{S_0 n(d_1)}{2} \frac{\hat{\sigma}^2(T)}{\sqrt{g(T)}}  + K ZC(0, T)f(0, T) N(d_2)
\end{align}

where we have used (\ref{usefulRel}) in the second equality, 
$d_{1, T}-d_{2, T} = \frac{1}{2}\frac{\hat{\sigma}^2(T)}{\sqrt{g(T)}}$ and 
$ZC_T(0,T) = -ZC(0,T) f(0,T)$ to obtain the third expression.

\begin{equation}
C_K(T, K) = S_0 n(d_1) d_{1, K} - ZC(0,T) \left[ N(d_2) + K n(d_2) d_{2, K} \right] 
\end{equation}

Using $d_{1, K} = d_{2, K}$ and result of lemma (\ref{usefulRel}), we get expression (\ref{GreekCK}). 
Finally, we obtain formula (\ref{GreekCKK}) by deriving (\ref{GreekCK}) w.r.t $K$ and using 
$d_{1, K} = -\frac{1}{K \sqrt{g(T)}}$.

\newpage

\bibliographystyle{plain}

\end{document}